\numberwithin{table}{section}
\numberwithin{equation}{section}
\numberwithin{figure}{section}
\newtheorem{prop}{Proposition}[section]
\newtheorem{theorem}{Theorem}[section]
\newtheorem{condition}{Condition}
\newtheorem{corollary}{Corollary}[section]
\newtheorem{lemma}[theorem]{Lemma}
\DeclareMathOperator{\E}{\mbox{E}}
\title{Optimal reinsurance for risk over surplus ratios}
\author{Erik B\o lviken and Yinzhi Wang\\
	Department of Mathematics\\
	University of Oslo} 
\date{\today}
\begin{document}
	\maketitle
	\noindent
	\small 
	\newpage 
	{\bf Abstract}
	\\\\
Optimal reinsurance when Value at Risk and expected
surplus is balanced through their ratio is studied, and it is demonstrated
how results for
risk-adjusted surplus can be utilized. Simplifications for large
portfolios are derived, and this large-portfolio study suggests a new condition
on the reinsurance pricing regime which is crucial for the results
obtained. 
One or two layer contracts now become optimal
for both risk-adjusted surplus and the risk over expected surplus ratio, but
there is no second layer when portfolios are large or when 
reinsurance 
prices are below some threshold.
Simple 
approximations of the optimum portfolio is considered, and 
their degree of
degradation compared to the optimum is studied which leads to  
theoretical degradation rates  
as the number of policies grow. 
The theory is supported by numerical experiments which suggest
that the shape of the claim severity distributions may not  be of primary importance
when designing an optimal reinsurance program.
It is argued that the approach
can be applied to
Conditional
Value at Risk as well.
	\\\\\\\\\\\\\\\\\\\\\\\\\\\\\\\\\\\\\\\\\\\\\\\\\\
	{\bf Key words and phases}
	\\\\
	Asymptotics, degradation rates, large portfolios, one- and two-layer 
	contracts, reinsurance pricing regimes, risk-adjusted surplus.
	\newpage
\section{Introduction}\label{sec:intro}
Actuarial literature contains 
countless formulations of what optimal reinsurance should mean,
for example, \cite{borch1960, arrow1963, kaluszka2001} and \cite{cheung2014optimal}. A criterion with much sense industrially
is to balance risk and profit through a ratio where a risk measure is
divided on expected surplus. Most of the paper makes use of  Value at Risk
in this role.
That is how the insurance industry is regulated at present, but we shall argue
that the perhaps theoretically more appealing Conditional Value at Risk
could be handled in the same manner which would yield similar, 
but not quite identical 
results; see the companion paper \cite{wang2019}.
Ratios of risk and expected surplus are related to
risk-adjusted surplus under which many authors have examined how
reinsurance could be optimized, see \cite{chi2012reinsurance,asimit2013optimal,cheung2017characterizations} and \cite{chi2017optimal}. Of particular 
significance for the present paper is 
\cite{chi2017optimal} who were able to show 
that the
reinsurance treaties maximizing risk-adjusted surplus are of the
multi-layer type. It is not tenable 
to assume that reinsurance pricing is based on a fixed loading,
as noted already 
by \cite{borch1960}, and \cite{chi2017optimal} made use of 
a general formulation of reinsurance pricing
that goes back to \cite{buhlmann1980economic}. Many other
researchers, for
example, \cite{chi2013optimal} and \cite{zhuang2016marginal} have used 
this scheme. 
\\\\
We follow in their track except that we argue for a modification. 
The world of reinsurance is above all a market with
pricing offers from reinsurers
defining a supply curve for such risk, but
the variation in time is enormous with big events
like the World Trade Center terrorism in $2001$ 
having huge impact.
Details are not open to
the public in any case, and academic studies must therefore employ  
`premium principles' as proxies for the real market prices. The
question is what conditions should be imposed on them. It is demonstrated
in this paper
that the original set-up in \cite{buhlmann1980economic} would enable the insurer
to reinsurance everything up to Value at Risk and for large portfolios
still obtain profit. No net 
solvency capital would then be necessary, but is it likely 
that the insurance 
market should allow such a situation to exist? We think not and have
derived from this viewpoint a new condition that differs from
the one in current use. 
All our results
depend on it. One of the consequences is that the multi-layer solution
for risk-adjusted surplus 
in \cite{chi2017optimal} is reduced to one or two layers with more than one 
layer only when risk is expensive, and this result is a stepping stone
to similar results for the risk over surplus ratio.
\\\\ 
One issue that does not seem to have been treated
in actuarial literature is what happens
when the portfolio size become infinite.
This is highly relevant for reinsurance of single risks
since many of them are sums of a large number of policies. 
Such asymptotic studies have in 
statistics or other branches of applied mathematics often
lead to simplification and clarity as indeed they do here.
We have limited ourselves to independent risks so that we can
lean on the central limit theorem 
and its Lindeberg extension, but the approach can without doubt be extended
to dependent risks influenced by some common random factor; more on that
in the concluding section. 
A part of all asymptotic studies is how well such approximations
perform for finite portfolio sizes. This is not in our case a question of error,
but rather one of degradation of the criterion in terms of how much
it changes compared to its value at the strict optimum.
Theoretical degradation studies are developed  
in Section \ref{sec:large_port} through large portfolio analysis
which leads to approximate rates of decay as the number of policies grows. 
The numerical side is examined in Section \ref{sec:numerical} 
though simulation studies and illustrates how large portfolios must
be for the large-portfolio approximations to perform well.

\section{Basics and preliminaries}\label{sec:basics}
\subsection{Notation and formulation}\label{subsec:notations}
Let $X$ be the total claim losses of a single portfolio
of non-life insurance policies over a certain period of time 
(often one year) and let $R_I=R_I(X)$ be the net risk retained after 
having received
from a reinsurer $I=I(X)$ so that $R_I(X)=X-I(X)$.
Natural restrictions on $I(X)$ are
\begin{equation}
0\leq I(X)\leq X
\hspace*{1cm}\mbox{and}\hspace*{1cm}
0\leq I(X_2)-I(X_1)\leq X_2-X_1
\hspace*{0.2cm}\mbox{if}\hspace*{0.2cm}X_1\leq X_2
\label{e21}
\end{equation}
where the first condition is obvious since the reinsurer will never 
pay out more than the original claim. The second condition 
is known as the slow growth property, and it opens for moral hazard
if it isn't satisfied; consult \cite{cheung2014optimal}. Contracts 
satisfying~(\ref{e21}) will be referred to as feasible.

Let $F(x)$ be the distribution function of $X$ which 
starts at the origin. Risk measures
that has attracted much interest are 
Value at Risk and Conditional Value at Risk with
formal mathematical definitions 
\begin{equation}
\mbox{VaR}_\epsilon(X)=\inf\{x|1-F(x)\leq \epsilon\}
\hspace*{1cm}\mbox{and}\hspace*{1cm}
\mbox{CVaR}_\epsilon(X)=E\{X|X\geq\mbox{VaR}_\epsilon(X)\}
\label{e22}
\end{equation}
where $\epsilon>0$ is a given level. 
When these quantities apply to the retained risk $R_I$
we shall be using notation like $\mbox{VaR}_\epsilon(R_I)$. 
The right inequality in~(\ref{e21})
implies that the retained risk $R_I(X)$ is non-decreasing in $X$ so that
if $x_\epsilon=\mbox{VaR}_\epsilon(X)$, then
\begin{equation}
\mbox{VaR}_\epsilon(R_I)=R_I(x_\epsilon)
\hspace*{1cm}\mbox{and also}\hspace*{1cm}
I(x_\epsilon)=x_\epsilon-R_I(x_\epsilon).
\label{e23}
\end{equation}
Premia involved are $\pi$ collected by the cedent from its customers
and $\pi(I)$ for the reinsurance. Those are in their simplest form 
\begin{equation}
\pi=(1+\gamma)E(X)
\hspace*{1cm}\mbox{and}\hspace*{1cm}
\pi(I)=(1+\gamma^{\mbox{\tiny re}})E\{I(X)\}
\label{e24}
\end{equation}
with $\gamma>0$ and $\gamma^{\mbox{\tiny re}}>0$ 
given loadings (or coefficients) and
with $\gamma^{\mbox{\tiny re}}>\gamma$ in practice. 
The reinsurance part is inadequate. 
Prices in that  market is likely to increase with risk
beyond the fixed 
coefficient $\gamma^{\mbox{\tiny re}}$ in~(\ref{e24}) right. 
In real life that would be 
captured by offers the cedent company receives 
from reinsurers, but such information isn't available and for academic work 
we must instead supply a so-called premium principle 
which is dealt with in Section \ref{subsec:re_pricing}.
\subsection{Expected surplus}\label{subsec:ex_surplus}
We need a mathematical expression for the expected surplus 
for an insurer
under a given
reinsurance treaty. Into the account goes the
premium $\pi$ collected from clients and out of it
their net 
claims $R_I(X)$ and the reinsurer
premium $\pi(I)$. If the cost of
holding solvency capital is subtracted too, a simplified
summary of the balance sheet becomes
\begin{equation}
{\cal A}(I)=\pi-R_I(X)-\pi(I)-\beta \mbox{VaR}_\epsilon(R_I)
\label{e25}
\end{equation}
with the notation highlighting the dependence on the reinsurance 
function $I(X)$. The
coefficient
$\beta\geq 0$ applies
per money unit. 
It seems industrially plausible 
to attach cost to the entire solvency capital, not only to the 
part above the average,
as in \cite{chi2017optimal}. Alternatively
such cost
might be in terms of the Conditional Value at Risk
with CVaR$_\epsilon$ replacing VaR$_\epsilon$ on the right in~(\ref{e25}).
Let ${\cal G}(I)=E\{{\cal A}(I)\}$ 
and take expectations in~(\ref{e25}). 
Inserting~(\ref{e23}) and~(\ref{e24}) left yield
\begin{equation}
{\cal G}(I)=\gamma E(X)-\beta x_\epsilon-(\pi(I)-E\{I(X)\})+\beta I(x_\epsilon)
\label{e26}
\end{equation}
with the last two terms
depending on the reinsurance contract.
\subsection{Reinsurance pricing}\label{subsec:re_pricing}
A standard formulation, used for example 
in \cite{chi2017optimal}, is to introduce a market factor
$M(Z)$ so that
\begin{equation}
\pi(I)=E\{I(X)M(Z)\}
\label{e27}
\end{equation}
where $Z$ is a positive random variable correlated with $X$ 
and $M(\cdot)$ some function for which an exponential one will be used 
with the examples in Section \ref{subsec:num1}. 
General models for $(X,Z)$ is constructed through copulas.
With $G(z)$ the distribution function of
$Z$ let 
\begin{equation}
X=F^{-1}(U)\hspace*{1cm}\mbox{and}\hspace*{1cm}Z=G^{-1}(V)
\label{e28}
\end{equation}
with $F^{-1}(u)$ and $G^{-1}(v)$ the percentile functions
of $F(x)$ and $G(z)$ and with $(U,V)$ a dependent pair of uniform variables.
This yields an alternative expression for $\pi_I$.
By the rule of double expectation  
\begin{displaymath}
\pi(I)=E\{I(X)M(Z)\}=E\{E\{I(X)M(Z)|U\}\}=E\{I(X)E\{M(Z)|U\}\},
\end{displaymath}
with the last identity due to $X=F^{-1}(U)$ being fixed by $U$.
Hence
\begin{equation}
\pi(I)=E\{I(X)W\{F(X)\}\}
\hspace*{1cm}\mbox{where}\hspace*{1cm}W(u)=E\{M(Z)|u\}.
\label{e29}
\end{equation}
The impact of the dependency between  $X$ to $Z$ is taken care of by $W(u)$ 
where the distribution
function $F(x)$ of $X$ doesn't enter, and this will prove convenient 
when $F(x)$ depends on the underlying portfolio size in Section \ref{sec:large_port}.
\\\\
It is often  assumed that $E\{M(Z)\}=1$,  
but that is hardly an obvious assumption.
It is being violated 
when $M(Z)=1+\gamma^{\mathrm{re}}$ as in~(\ref{e24}) right, and there is 
in the present work no point in restricting the set-up so strongly, more
on that later. Note in passing
that if $W(u)$ is a non-decreasing 
function of $u$, as
is plausible and assumed below, then
\begin{displaymath}
\pi(I)=E\{I(X)W\{F(X)\})\}\geq E\{I(X)\}E\{W\{F(X)\}\}
=E\{I(X)\}E\{M(Z)\}
\end{displaymath}
since $U=F(X)$ and $E\{W(U)\}=E\{M(Z)\}$ by~(\ref{e29}) right. Hence
a non-decreasing $W(u)$ guarantees
the reinsurance premium to be larger than the expected reinsurance 
pay-out if
$E\{M(Z)\}\geq 1$. 
Much more general formulations of premium principles
can be found in \cite{Furman2009}.
\\\\
The price on reinsurance can also be expressed through the function
\begin{equation}
K(u)=\int_{u}^1\{W(v)-1\}dv,\hspace*{1cm}0\leq u\leq 1
\label{e210}
\end{equation}
which is sketched in Figure 2.1 below.
Consider the
reinsurer
expected surplus which by~(\ref{e29}) left is
\begin{displaymath}
\pi(I)-E\{I(X)\}=\int_0^\infty(W\{F(x)\}-1)I(x)dF(x)
\end{displaymath}
or since the derivative $K'(u)=-\{W(u)-1\}$, 
\begin{displaymath}
\pi(I)-E\{I(X)\}
=-\int_0^\infty K'\{F(x)\}I(x)dF(x).
\end{displaymath}
But if ${\cal I}_{x>t}=0$ if $x\leq t$ and $=1$ otherwise.
then
\begin{displaymath}
I(x)=\int_0^x{\cal I}_{x>t}dI(t)
\end{displaymath}
so that
\begin{displaymath}
\pi(I)-E\{I(X)\}
=
-\int_0^\infty K'\{F(x)\} \int_0^x{\cal I}_{x>t}dI(t)dF(x)
=-\int_0^\infty \int_t^\infty K'\{F(x)\} dF(x)dI(t),
\end{displaymath}
after changing the order of integration.
Since $K\{F(x)\}\rightarrow K(1)=0$ as $x\rightarrow \infty$, 
it follows that
\begin{equation}
\pi(I)-E\{I(X)\}=\int_0^\infty K\{F(t)\} dI(t).
\label{e211}
\end{equation}
\subsection{Conditions on  reinsurance pricing}\label{subsec:rein_condi}
The function $K(u)$ will play a key role, and it is possible to extract
some useful properties of it if some restrictions are imposed on the
reinsurance pricing regime, notably:
\\
\begin{condition}\label{condi1}
	(i) $W(u)$ is non-decreasing and (ii)
$E\{M(Z)\}> 1+\gamma$.
\end{condition}
The first assumption assumes a positive type of dependence  
between the market factor $M(Z)$ and the risk $X$, surely reasonable.
A sufficient condition for that is $M(z)$ 
being non-decreasing in $z$ and the model
for $(X,Z)$
in~(\ref{e28})  based on a positive dependent copula for
$(U,V)$ 
in the sense that Pr$(V>v|u)$ is increasing in $u$ 
for all $v$. Most copulas satisfy this, and it
is under these circumstances a trivial matter to verify that 
$W(u)$ is monotone upwards. 
\\\\
The second assumption which will be needed 
for the large-portfolio study in Section \ref{sec:large_port},
may seem less obvious since
many authors assume $E\{M(Z)\}=1$. This
condition goes 
back to \cite{buhlmann1980economic} who derived it through
an economic  equilibrium argument. Section \ref{subsec:4.2} will present
alternative reasoning with some resemblance  to arbitrage 
which leads to Assumption (ii).
From~(\ref{e210})
\begin{displaymath}
K(0)=\int_0^1(W(v)-1)dv=E\{M(Z)\}-1
\end{displaymath}
and the assumption is the same as
\begin{equation}
K(0)>\gamma
\label{e211a}
\end{equation}
which is the version that will be cited below. 
To see what it mean suppose
there is a fixed loading $\gamma^{\mbox{\tiny re}}$ in the reinsurance market
so that
$W(u)=1+\gamma^{\mbox{\tiny re}}$. Then
$K(0)=\gamma^{\mbox{\tiny re}}$, 
and Assumption (ii) implies
$\gamma^{\mbox{\tiny re}}> \gamma$ with the  loading
in the reinsurance market the larger one.
\\\\
Some useful deductions on the form of the function 
$K(u)$ can be drawn under Condition \ref{condi1}:
\\
\begin{lemma}\label{lem2.1}
	Suppose Condition \ref{condi1} is true. Then
	$K(u)\geq 0$ {\em everywhere},
	and 
		there is a unique real number $\delta$  between
	$0$ and $1$
	so that
	\begin{equation}
	K(1-\delta)=\gamma
	\qquad\mbox{and}\qquad K'(1-\delta)\leq 0
	\label{e211b}
	\end{equation}
	where $K'(u)$  is the derivative. 
\end{lemma}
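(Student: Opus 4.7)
The plan is built around the observation that Condition~\ref{condi1}(i) makes $K$ concave on $[0,1]$. Since $W$ is non-decreasing and integrable on $[0,1]$ (because $\int_0^1 W(v)\,dv=E\{M(Z)\}$ is finite), the identity $K'(u)=-\{W(u)-1\}$ noted after~(\ref{e210}) shows that $K'$ is non-increasing. Combined with $K(1)=0$ and, by Condition~\ref{condi1}(ii), $K(0)=E\{M(Z)\}-1>\gamma$, this concavity is the driver of the whole argument.

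To get $K(u)\geq 0$ everywhere, I would apply the concavity inequality with weights $u$ and $1-u$ at the endpoints: $K(u)\geq(1-u)K(0)+uK(1)=(1-u)K(0)$, which is strictly positive for $u\in[0,1)$, while $K(1)=0$. For existence of $\delta$, since $K$ is continuous with $K(0)>\gamma$ and $K(1)=0<\gamma$, I would define $u_\star=\inf\{u\in[0,1]:K(u)\leq\gamma\}$; continuity and the IVT place $u_\star$ in $(0,1)$ with $K(u_\star)=\gamma$, and the right-difference quotients $(K(u)-K(u_\star))/(u-u_\star)\leq 0$ for $u>u_\star$ force $K'(u_\star)\leq 0$. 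Setting $\delta=1-u_\star$ then gives a value in $(0,1)$ with both required properties.

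The delicate step is uniqueness, because $W$ is only assumed non-decreasing, so $K$ could a priori sit on a flat plateau. Suppose two distinct candidates $u_\star<u'$ both satisfy $K(\cdot)=\gamma$ and $K'(\cdot)\leq 0$. Concavity with equal endpoint values gives $K\geq\gamma$ on $[u_\star,u']$; on the other hand, $K'(u_\star)\leq 0$ combined with $K'$ non-increasing forces $K$ to be non-increasing on $[u_\star,u']$, hence $K\leq\gamma$ there as well. Thus $K\equiv\gamma$ on $[u_\star,u']$, which forces $W\equiv 1$ on $(u_\star,u')$, and monotonicity of $W$ then gives $W\leq 1$ on $[0,u_\star]$. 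Integrating yields $K(0)-K(u_\star)=\int_0^{u_\star}\{W(v)-1\}\,dv\leq 0$, contradicting the strict inequality $K(0)>\gamma=K(u_\star)$ supplied by Condition~\ref{condi1}(ii). This is the pinch-point of the argument: the strict form of Condition~\ref{condi1}(ii), rather than the weaker $E\{M(Z)\}\geq 1+\gamma$, is precisely what rules out such a plateau and delivers uniqueness.
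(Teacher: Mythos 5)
Your proposal is correct, and the route is genuinely different from the paper's. The paper argues informally from the sign structure of $K'(u)=1-W(u)$: since $W$ is non-decreasing, $K$ is either decreasing throughout or unimodal, peaking at a point $u_0$ where $W(u_0)=1$, and from there it reads off a unique down-crossing of the level $\gamma$; positivity is then obtained by splitting at $u_0$ and inspecting the sign of $W-1$ in the two regimes. You instead observe that $K$ is concave with $K(1)=0$ and $K(0)>\gamma$, derive $K(u)\geq(1-u)K(0)\geq 0$ in one line from the concavity inequality, obtain existence of the crossing by the intermediate value theorem, and handle uniqueness by explicitly ruling out a plateau at level $\gamma$. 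What your approach buys is robustness: it never assumes $W$ is continuous or that there actually exists a point $u_0$ with $W(u_0)=1$ (if $W$ jumps across $1$, the paper's $u_0$ does not exist and the text argument needs patching, whereas concavity is unaffected), and it makes explicit where the strict inequality $K(0)>\gamma$ from Condition~\ref{condi1}(ii) is truly load-bearing. The paper's route is shorter and closer to the picture in Figure~\ref{fig:2.1}, but it quietly glosses over the plateau possibility that you address head-on.

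Two very minor remarks. First, when you assert that the right-difference quotients $(K(u)-K(u_\star))/(u-u_\star)$ are $\leq 0$ for $u>u_\star$, that is not immediate from the definition of $u_\star$ as an infimum; it does follow from concavity and $K(0)>K(u_\star)$ (the three-slope inequality gives $(K(u)-K(u_\star))/(u-u_\star)\leq(K(u_\star)-K(0))/u_\star<0$), but since you already have the machinery it would be cleaner to either cite that slope inequality or simply use the left-difference quotients, which are negative directly from the definition of the infimum, and then invoke $K'$ non-increasing to pass to the right derivative. Second, that same three-slope inequality already forces $K(u)<\gamma$ strictly for all $u>u_\star$, which gives uniqueness of the crossing point outright and makes the plateau-by-contradiction argument unnecessary; it is not wrong, just more work than needed.
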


\begin{proof}
	Note that $K'(u)=-W(u)+1$ so that either $K(u)$ is 
	decreasing everywhere or, as in Figure 2.1,  
there is an $u_0$ between $0$ and $1$ so that 
	$W(u_0)=1$ which means that $K(u)$ decreases to the left 
and increases to 
	the right of $u_0$. There is in either case
	a unique $\delta$ as in~(\ref{e211b}) with the derivative of $K(u)$ negative at 
	that point. That $K(u)\geq 0$ is immediate 
when $u>u_0$ since the integrand in~(\ref{e210}) is positive (or zero) 
everywhere whereas we also have
	\begin{displaymath}
	K(u)=K(0)-\int_0^u(W(v)-1)dv
	\end{displaymath}
	which is $\geq 0$ when $u<u_0$ since the 
integrand now is negative.
\end{proof} 
\begin{figure}
	\centering
	\begin{tikzpicture}[scale=1.5]
	\draw[->](-0.3,0) -- (6.2,0) node[below] {$u$};
	\draw[->](0,-0.3) -- (0,4.3) node[left] { $K(u)$};
	\node[below left]at(0,0) {\footnotesize $0$};
	\draw[dashed] (5.3,2)--(0,2) node[left]{$\gamma$};
	\draw[black,line width=1pt,name path=plot](0,2.5)..controls(1.5,4.5) and(2,2.5)..(3.0,1.5)..
	controls (4,0.3) and (4.7,0.2)..(5.9,0);
	\node[below](a)at(1.15,0){$u_0$};
	\path[name path=froma](a)--+(0,4);
	\draw[dashed,name intersections={of=froma and plot}](a)--(intersection-1)--(0,0|-intersection-1)node[left]{\footnotesize $ K(u_0)$};
	\draw[dashed] (2.56,2.0)--(2.56,0) node[below] {$1-\delta$};
	\draw[dashed] (5.3,0.1)--(5.3,0) node[below]{$1-\epsilon$};
	\draw[dashed] (5.9,0)--(5.9,0) node[below]{1};
	
	\end{tikzpicture}
	\caption{Plot of the K function $K(u)$ when it has a maximum
and $\delta>\epsilon$.}\label{fig:2.1}
\end{figure}
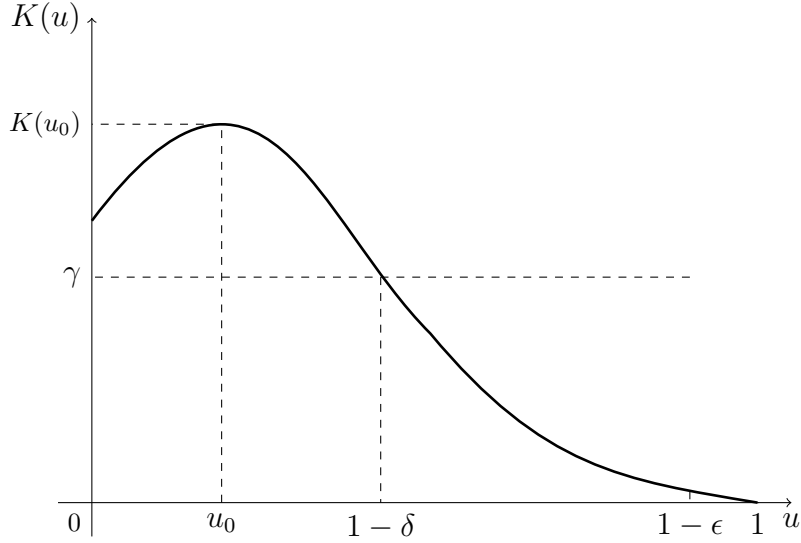

The function $K(u)$ has been plotted in Figure \ref{fig:2.1} when it 
has a maximum.
Its values at $1-\delta$ where it crosses the 
$\gamma$-line and its value at the Value at Risk level $1-\epsilon$ 
will in Section \ref{sec:large_port}
play a main role in defining optimal or nearly optimal reinsurance 
for large portfolios with
the $1-\delta$ and $1-\epsilon$ percentiles of 
the underlying risk variable $X$ being the lower and upper limit
of one-layer contracts. It could happen that the $\gamma$-line crossing in 
 Figure \ref{fig:2.1} takes place to the right of $1-\epsilon$. If so,
the upper and lower limits coincide, and reinsurance is so expensive
that it is optimal for the insurer to carry all risk himself.
It will be of some importance
that $K'(1-\delta)\leq 0$, and in practice this inequality is sharp
which is taken for granted in Proposition \ref{prop4.2} below.

\section{Optimization}\label{sec:opti}
\subsection{Risk-adjusted surplus}\label{subsec:risk_surplus}
Many contributors to reinsurance optimum theory work with
Lagrangian set-ups of the form
\begin{equation}
{\cal L}(I)={\cal G}(I)-\lambda\rho(I)
\label{e212}
\end{equation}
with $\rho(I)$ a risk measure and 
$\lambda>0$ a price on risk;
see \cite{balbas2009optimal,tan2011optimality,Jiang_2017} and \cite{Weng2017}. This
risk-adjusted, expected surplus ${\cal L}(I)$ of the insurer is then maximized,
and as $\lambda$ varies the solutions 
define
an efficient frontier tracing out the  minimum 
$\rho(I)$ obtainable for a given value of 
${\cal G}(I)$. These solutions are
needed when the risk over surplus ratio is studied in Section \ref{subsec:3.3}.\
\\
\begin{prop}\label{prop1}
	\cite{chi2017optimal}. If $\rho(I)=\mbox{VaR}_\epsilon(R_I)$, is Value at Risk,
	then the optimal reinsurance function
	in~(\ref{e212}) is
	\begin{equation}
	I_\lambda(x)=\int_0^x{\cal I}_{\psi_\lambda(y)>0}\,dy
	\hspace*{1cm}\mbox{where}\hspace*{1cm}\psi_\lambda(x)=-K\{F(x)\}+
	(\beta+\lambda){\cal I}_{x<x_\epsilon}
	\label{e213}
	\end{equation}
	with ${\cal I}_A$ the indicator function of the event $A$.
\end{prop}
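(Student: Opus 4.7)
The plan is to reduce the Lagrangian ${\cal L}(I)={\cal G}(I)-\lambda\,\mbox{VaR}_\epsilon(R_I)$ to a single integral with respect to $dI$ and then read off the optimum pointwise in the integrand.

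First I would substitute the two pieces that depend on $I$. From~(\ref{e23}) we have $\mbox{VaR}_\epsilon(R_I)=x_\epsilon-I(x_\epsilon)$, and from~(\ref{e26}) only the last two terms of ${\cal G}(I)$ involve $I$. Combining these with~(\ref{e212}) yields, after dropping the constant $\gamma E(X)-\beta x_\epsilon-\lambda x_\epsilon$,
\begin{equation*}
{\cal L}(I)=\mbox{const}-\bigl(\pi(I)-E\{I(X)\}\bigr)+(\beta+\lambda)I(x_\epsilon).
\end{equation*}
Next I would rewrite both $I$-dependent terms as Stieltjes integrals against $dI$. Formula~(\ref{e211}) already gives $\pi(I)-E\{I(X)\}=\int_0^\infty K\{F(t)\}\,dI(t)$, while $I(x_\epsilon)=\int_0^\infty{\cal I}_{t<x_\epsilon}\,dI(t)$ since $I(0)=0$. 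Merging,
\begin{equation*}
{\cal L}(I)=\mbox{const}+\int_0^\infty\bigl[-K\{F(t)\}+(\beta+\lambda){\cal I}_{t<x_\epsilon}\bigr]\,dI(t)
=\mbox{const}+\int_0^\infty\psi_\lambda(t)\,dI(t).
\end{equation*}

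Now I would exploit feasibility. Conditions~(\ref{e21}) say that $I$ is non-decreasing and $1$-Lipschitz with $I(0)=0$, so $dI$ admits a density $I'(t)\in[0,1]$ (a.e.) and any such density defines a feasible contract. Hence maximizing ${\cal L}(I)$ reduces to the pointwise problem of choosing $I'(t)\in[0,1]$ to maximize $\psi_\lambda(t)I'(t)$ for each $t$. Obviously the optimum takes $I'(t)=1$ on $\{\psi_\lambda(t)>0\}$ and $I'(t)=0$ on $\{\psi_\lambda(t)<0\}$ (the set $\{\psi_\lambda=0\}$ contributes zero and may be assigned either way). Integrating this density gives exactly the expression~(\ref{e213}) for $I_\lambda(x)$, and feasibility is automatic since the integrand lies in $\{0,1\}$.

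The only delicate step is the passage from the slow-growth property to the existence of an $[0,1]$-valued density for $dI$; this is where the proof leans on absolute continuity of $1$-Lipschitz functions on $[0,\infty)$ and on Fubini/change of order in deriving~(\ref{e211}), both of which are standard. After that, the argument is essentially algebraic: once ${\cal L}(I)$ has been cast as $\mathrm{const}+\int\psi_\lambda\,dI$, the optimal $I_\lambda$ is forced by pointwise maximization.
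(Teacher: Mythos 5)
Your proposal is correct and follows essentially the same route as the paper: substitute~(\ref{e23}) and~(\ref{e26}) into the Lagrangian, use~(\ref{e211}) and $I(x_\epsilon)=\int_0^\infty{\cal I}_{t<x_\epsilon}\,dI(t)$ to cast ${\cal L}(I)$ as a constant plus $\int_0^\infty\psi_\lambda(t)\,dI(t)$, and then maximize pointwise using $dI/dt\in[0,1]$. Your closing remark on the Lipschitz-to-density step is a sensible note of rigor; the paper states the same fact more tersely but the argument is identical.
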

\begin{proof}
	It isn't assumed that $E\{M(Z)\}=1$ as 
	in \cite{chi2017optimal}, but their ingenious argument
	still works. 
	Insert~(\ref{e26}) and
	$\rho(I)=\mbox{VaR}_\epsilon(R)=x_\epsilon-I(x_\epsilon)$ into~(\ref{e212}). This
	yields
	\begin{displaymath}
	{\cal L}(I)=\gamma E(X)-(\beta+\lambda)x_\epsilon-(\pi_I-E\{I(X)\})+
	(\beta+\lambda) I(x_\epsilon)
	\end{displaymath}
	which can be combined with~(\ref{e211}) for the next to last term 
	on the right and also
	\begin{displaymath}
	I(x_\epsilon)=\int_0^\infty {\cal I}_{x<x_\epsilon}dI(x).
	\end{displaymath}
	for the last one. Hence
	\begin{equation}
	{\cal L}(I)=\gamma E(X)-(\beta+\lambda)x_\epsilon+\int_0^\infty\psi_\lambda(x)dI(x)
	\label{e213a}
	\end{equation}
	with $\psi_\lambda(x)$ as in~(\ref{e213}) right.
	The restrictions in~(\ref{e21}) means that $0\leq dI(x)/dx\leq 1$
	so that ${\cal L}(I)$ is maximized by selecting $dI(x)/dx=1$ whenever
	$\psi_\lambda(x)>0$ and   $dI(x)/dx=0$ otherwise, and this yields
	$I_\lambda(x)$ in~(\ref{e213}) as the optimum.
\end{proof}
\noindent
The proposition shows that the optimum 
is of the  multi-layer type. Let
\begin{equation}
I_{\bf a}(X)=\max(X-a_1,0)-\max(X-a_2,0)
\label{e214}
\end{equation}
where ${\bf a}=(a_1,a_2)^T$ 
is a vector of coefficients 
for which $a_1\leq a_2$. This is a single-layer contract
which is the solution in Proposition \ref{prop1} 
when $\psi_\lambda(x)>0$ between $a_1$ and $a_2$ and $\leq 0$ elsewhere.
The general solution
depends on how many times $\psi_\lambda(t)$ in~(\ref{e213})
crosses zero. If there are
three or four crossings there is an additional layer $I_{\bf b}(x)$
with ${\bf b}=(b_1,b_2)$, and the optimum is now
$I_{\bf ab}(x)=I_{\bf a}(x)+I_{\bf b}(x)$. In theory we may continue
to five or six crossings and a third layer and so on,
but arguably there are in practice at most two:
\\
\begin{corollary}\label{coro1}
If $\lambda\leq K(0)-\beta$, then
the optimum reinsurance function in Proposition \ref{prop1}
is under Assumption (i) in Condition \ref{condi1}
a single layer contract $I_{\bf a}(x)$ with
$a_2=x_\epsilon$. In the opposite case where  $\lambda> K(0)-\beta$, 
there may be
an additional layer $I_{\bf b}(x)$  with $b_1=0$  and $b_2<a_1$. 	
\end{corollary}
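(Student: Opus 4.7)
The plan is to apply Proposition~\ref{prop1} and classify the sign structure of $\psi_\lambda(x)=-K\{F(x)\}+(\beta+\lambda)\mathcal{I}_{x<x_\epsilon}$ with the help of Lemma~\ref{lem2.1}. Since $K(u)\ge 0$ everywhere, the indicator in $\psi_\lambda$ is the only mechanism that can render the integrand positive, so $\psi_\lambda(x)\le 0$ on $\{x\ge x_\epsilon\}$, and the optimal $I_\lambda$ can therefore place mass only on $[0,x_\epsilon]$. On that interval $\psi_\lambda(x)>0$ collapses to the single inequality $K\{F(x)\}<\beta+\lambda$, so the whole problem reduces to identifying the sublevel set $\{u\in[0,1-\epsilon):K(u)<\beta+\lambda\}$ and translating it back through $u=F(x)$.

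Next I would exploit the shape of $K$ under Assumption (i). Since $K'(u)=1-W(u)$ with $W$ non-decreasing, $K'$ changes sign at most once, from non-negative to non-positive, at some $u_0\in[0,1]$. Hence $K$ is either monotone decreasing or unimodal (ascending on $[0,u_0]$ and descending on $[u_0,1]$), with $K(1)=0$. For any threshold $c>0$ the sublevel set $\{K<c\}$ is therefore a union of at most two subintervals of $[0,1]$: a possibly empty initial piece $[0,u^{**})$, nonempty iff $c>K(0)$, and a terminal piece $(u^{*},1]$, which is nonempty because $K(1)=0<c$. Intersecting with $[0,1-\epsilon)$ and pulling back by $F^{-1}$ produces at most two $x$-intervals $[0,b_2)$ and $(a_1,a_2)$ satisfying $b_2\le u_0\le a_1$ in the $u$-coordinate, hence $b_2<a_1$ whenever both are nondegenerate.

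The corollary then splits into the two regimes according to the position of $\beta+\lambda$ relative to $K(0)$. If $\lambda\le K(0)-\beta$, then on $[0,u_0]$ we have $K(u)\ge K(0)\ge \beta+\lambda$, so the initial interval is empty and only $(a_1,a_2)$ survives; the integral formula in Proposition~\ref{prop1} then delivers the single-layer contract $I_{\mathbf{a}}(x)$ with $a_1=F^{-1}(u^{*})$ and $a_2=x_\epsilon$, the upper endpoint being forced by the $x<x_\epsilon$ cut-off inherent in $\psi_\lambda$. If instead $\lambda>K(0)-\beta$, then $K(0)<\beta+\lambda$ and an initial interval $[0,b_2)$ may appear as well, provided additionally that $K(u_0)>\beta+\lambda$; this contributes a second layer $I_{\mathbf{b}}$ with $b_1=0$ and $b_2<a_1$.

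The main obstacle is the housekeeping around degenerate configurations. One has to verify that $\{K<\beta+\lambda\}\cap[0,1-\epsilon)$ really yields the claimed endpoints when the crossings of $K$ with the level $\beta+\lambda$ coincide with $u_0$ or with $1-\epsilon$, and to check that in the second regime the sub-case $K(u_0)\le\beta+\lambda$ (where the two layers would merge into the full interval $[0,x_\epsilon]$) is harmlessly absorbed by the hedged formulation ``may be an additional layer''. Once these boundary situations are treated, the layered structure with the stated endpoint constraints is an immediate read-off from the indicator in Proposition~\ref{prop1}.
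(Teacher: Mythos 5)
Your proposal is correct and follows essentially the same route as the paper: no mass above $x_\epsilon$ because $K\geq 0$, then the sign pattern of $\psi_\lambda$ on $[0,x_\epsilon)$ is read off from the shape of $K$ under Assumption (i), with $\lambda\leq K(0)-\beta$ precisely the condition that kills the interval adjacent to the origin. Phrasing this as a sublevel-set analysis of $K$ rather than as counting zero crossings of $\psi_\lambda$ is only a cosmetic difference; the underlying monotonicity argument is the same.
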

\noindent
Degenerate situations are covered by this, for example
$a_1=a_2$ leading to 
no reinsurance at all or $a_1=0$  with the reinsurer covering
everything up to $x_\epsilon$. The (perhaps surprising) $b$-layer starting 
at $0$ occurs when $\lambda$ is large enough; i.e. when
the cost attached $\rho(I)$ weights heavily enough compared to 
the expected
gain ${\cal G}(I)$. 
\begin{proof}
Note that 
if $x>x_\epsilon$, then 
$\psi_\lambda(x)=-K\{F(x)\}\leq 0$
so that there is under no circumstances reinsurance above $x_\epsilon$.
On the other hand  
$\psi_\lambda(0)=-K(0)+\beta+\lambda$ and $\psi_\lambda(x)$ starts below or at
$0$ if $\lambda\leq K(0)-\beta$.
Since the derivative $K'(u)=-\{W(u)-1\}$, Assumption (i) tells us that
either $K\{F(x)\}$ decreases everywhere or start to increase and then decrease.
In either case $\psi_\lambda(x)$ crosses zero from negative to positive
at a single point or not at all.
If $\psi_\lambda(0)>0$ there will be a layer starting at zero and 
a second one if
there are two crossings below $x_\epsilon$.	
\end{proof}
\subsection{Numerical illustration}\label{subsec:num1}
How $\psi_\lambda(x)$ in~(\ref{e213}) varies with $x$ is shown in 
Figure \ref{fig:phi} 
under different combinations of risk parameters. Detailed conditions
and assumptions are recorded in Appendix \ref{appenB} along with the 
simulation algorithm used.
The portfolio is Poisson/Gamma with $50$ 
claims expected  annually and with individual losses 
on average $10$ with standard deviation $15$
which allows for huge losses. The model for $(X,Z)$ is based on the
Clayton Copula (consult Appendix \ref{appenB}) with a Gamma distribution for $Z$
with mean $1$ and standard deviation $0.3$. The market factor is
\begin{displaymath}
M(Z)=(1+\gamma^{\textrm{re}})\frac{e^{\omega Z}}{\E(e^{\omega Z})},
\end{displaymath}
with $\gamma^{\mbox{\tiny re}}$ and  $\omega$ parameters that are varied.
Note that $E\{M(Z)\}=1+\gamma^{\mbox{\tiny re}}$, and the B\"{u}hlman condition
corresponds to $\gamma^{\mbox{\tiny re}}=0$.

The default set of parameters in Figure \ref{fig:phi} is $(\beta, \theta, \gamma^{\textrm{re}}, \omega,\lambda)= (0.06 ,10, 0.2, 0.1, 0.1)$,
and the cost of capital $\beta$ is varied (top left), 
the price on risk $\lambda$ (top right), $\gamma^{\mbox{\tiny re}}$
(bottom left) whereas finally (bottom right) $\psi_\lambda(x)$ is shown 
for several parameter 
combinations. The optimum is in all but one case a one-layer
solution ending at the $1-\epsilon$ percentile $x_\epsilon$ 
($\psi_\lambda(x)>0$ from some lower limit up to $x_\epsilon$) 
or reinsurance everywhere $(\psi_\lambda(x)>0$
for $x<x_\epsilon$) or no reinsurance at all ($\psi_\lambda(x)<0$ everywhere).
The exception is the parameter combination with $\gamma^{\mbox{\tiny re}}=0$
so that $E\{M(z)\}=1$. Now $\psi_\lambda(0)>0$, 
and there is a $\bf b$-layer in the beginning and then an
$\bf a$-layer ending at $x_\epsilon$.

\begin{figure}[!t]	
	\centering
	\begin{minipage}{0.5\textwidth}
		\centering
		\includegraphics[width=\textwidth]{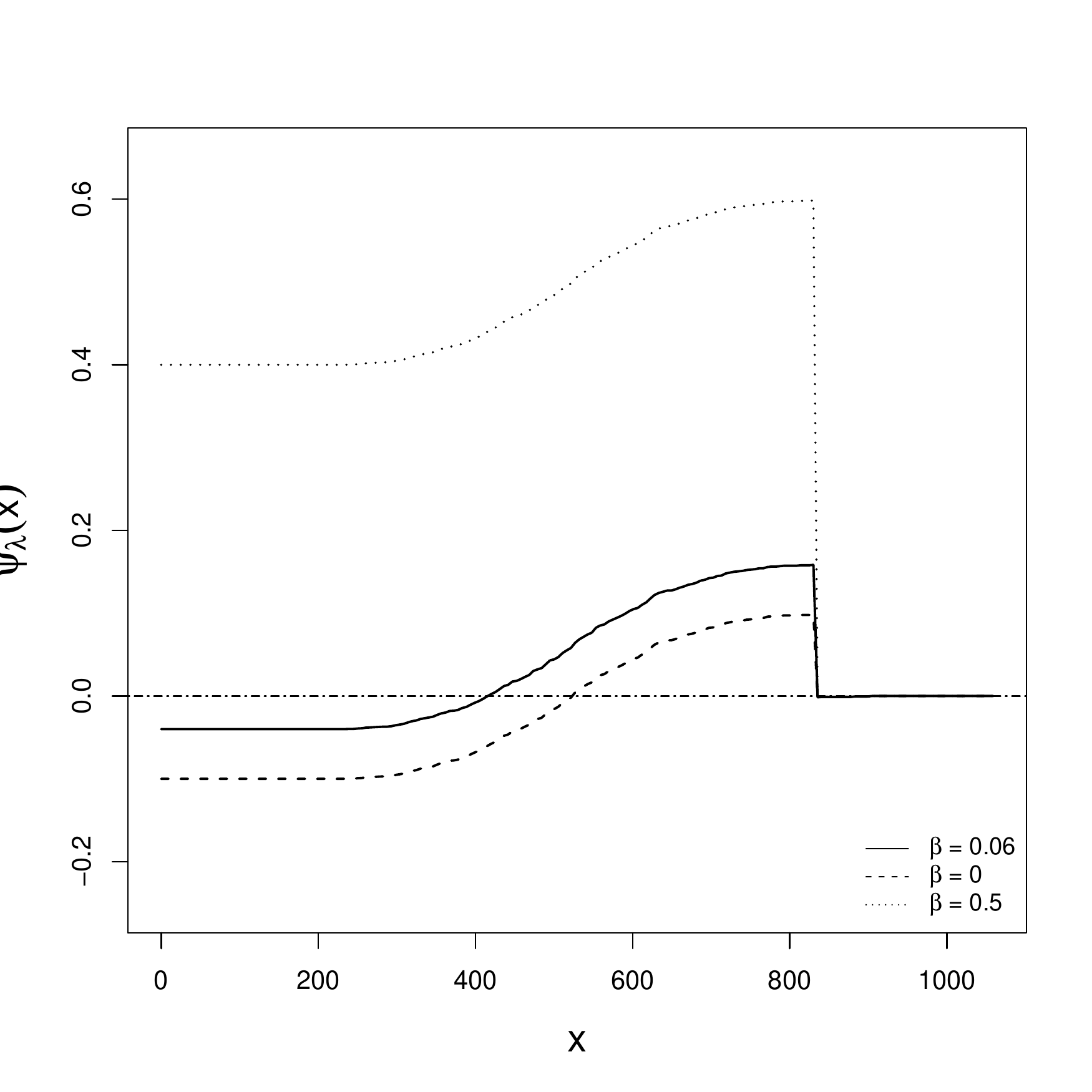}
	\end{minipage}%
	\begin{minipage}{0.5\textwidth}
		\centering
		\includegraphics[width=\textwidth]{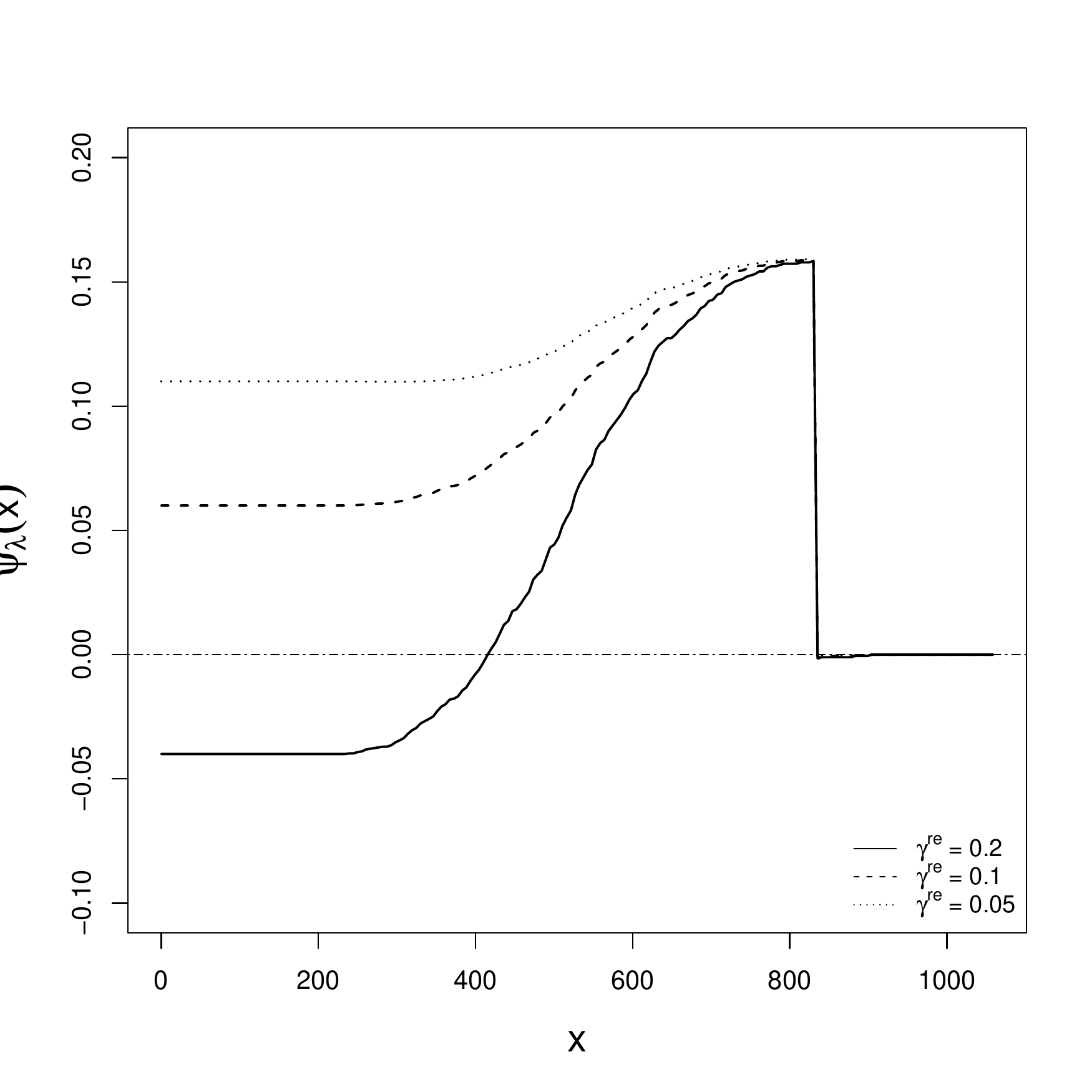}
	\end{minipage}
	\begin{minipage}{0.5\textwidth}
		\centering
		\includegraphics[width=\textwidth]{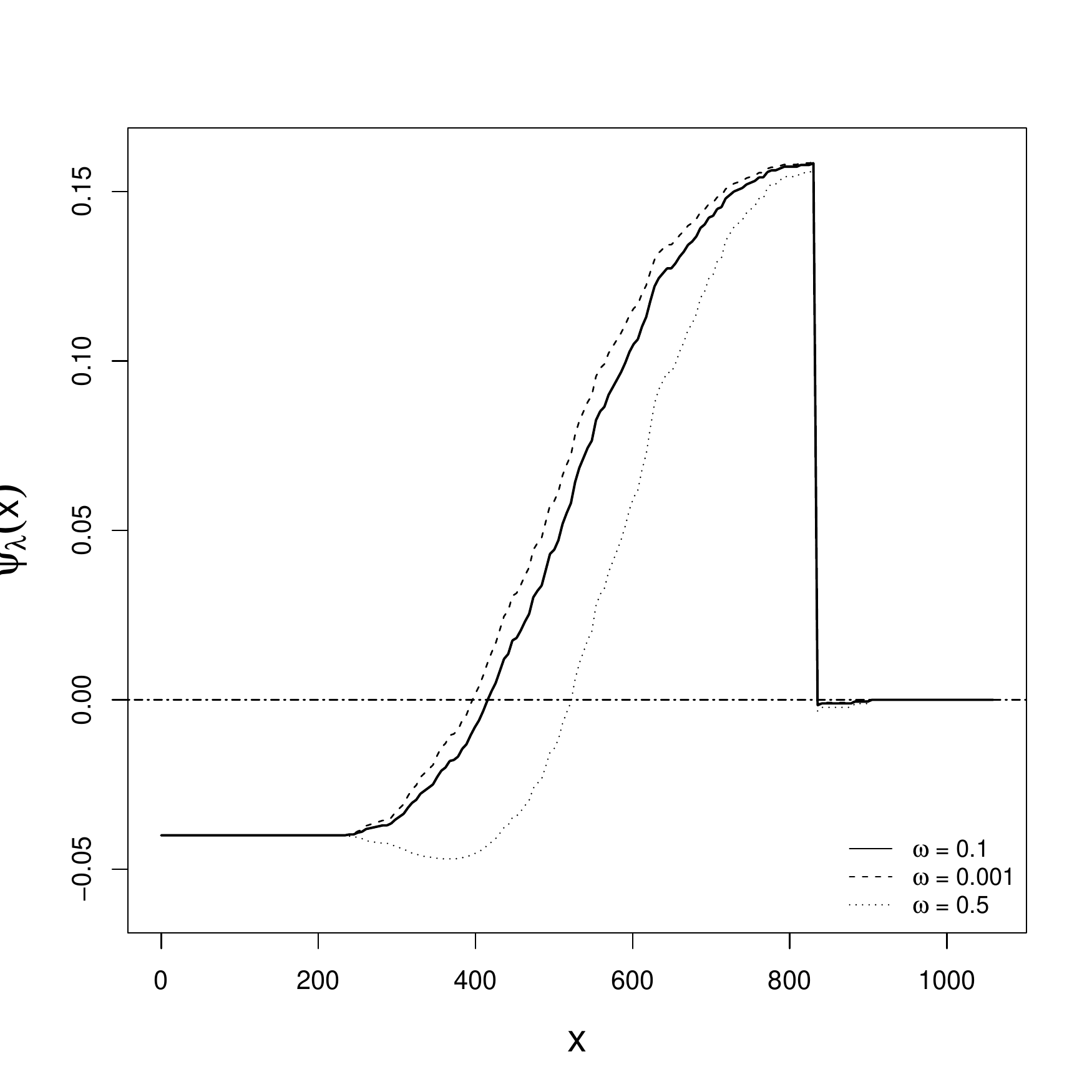}
	\end{minipage}%
	\begin{minipage}{0.5\textwidth}
		\centering
		\includegraphics[width=\textwidth]{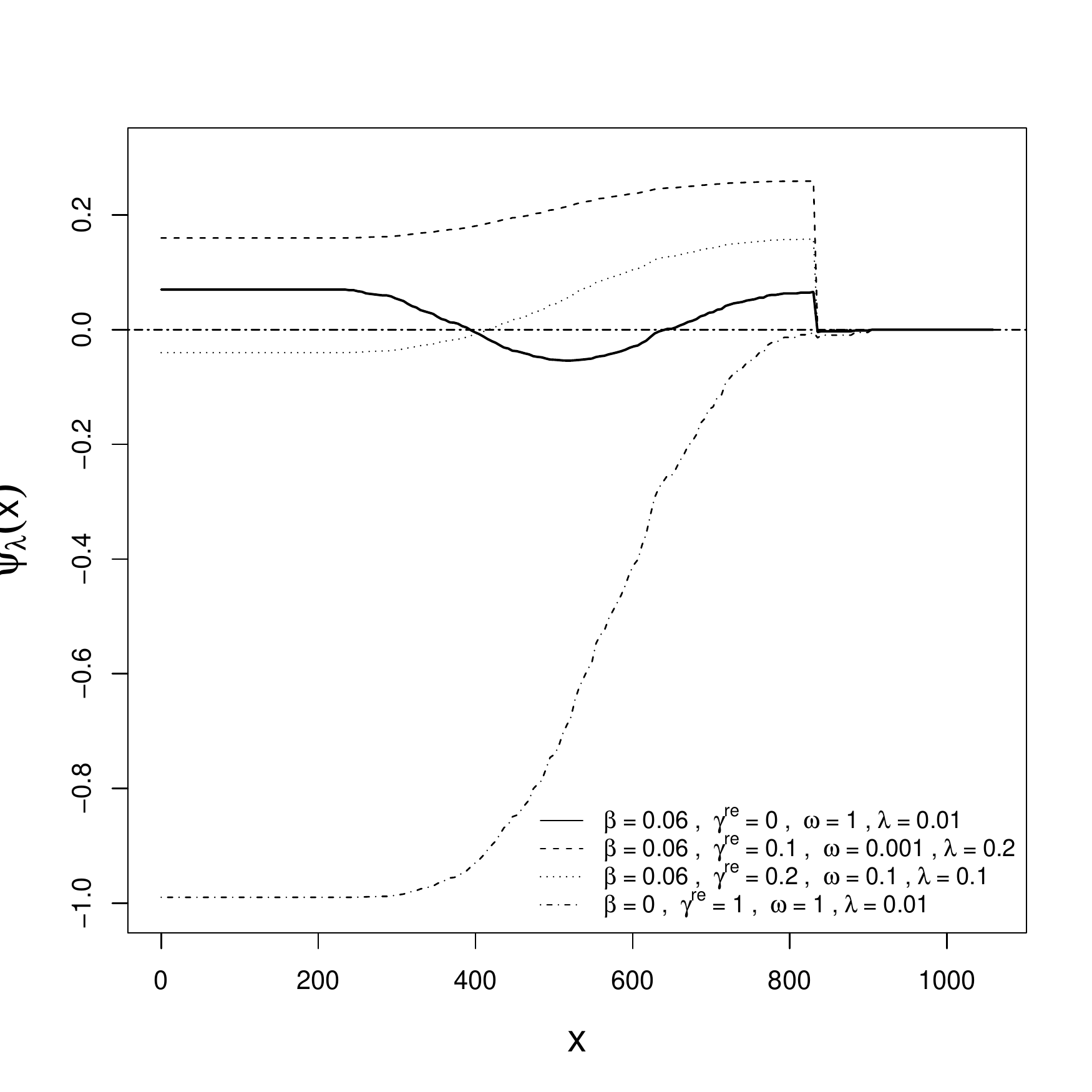}
	\end{minipage}
	\caption{Plot of $\psi_\lambda(x)$ against $x$ with $\beta$ (top left),
$\lambda$ (top right), $\gamma^{\mbox{\tiny re}}$ (bottom left) varied and 
with miscellaneous parameter combinations
(bottom right). The default set of parameters and conditions are in the text.}
	\label{fig:phi}
\end{figure}
\subsection{Risk over expected gain}\label{subsec:3.3}
Another criterion and the main focus in this paper is the ratio of
risk over expected surplus; i.e.
\begin{equation}
{\cal C}(I)=\frac{\rho(I)}{{\cal G}(I)}\hspace*{1cm}\mbox{conditioned on}\hspace*{1cm} {\cal G}(I) >0,
\label{e215}
\end{equation}
We search among insurance functions $I$ 
for which ${\cal G}(I)>0$, and ${\cal C}(I)$
is to be minimized among such contracts.
Inserting~(\ref{e23}),~(\ref{e26}) and~(\ref{e211}) yields the more
explicit form
\begin{equation}
{\cal C}(I)=\frac{x_\epsilon-I(x_\epsilon)}
{\gamma E(X)-\int_0^\infty  K\{F(x)\}dI(x)-\beta\{(x_\epsilon- I(x_\epsilon)\}}
\label{e216}
\end{equation}
where it is part of the optimization problem to keep the denominator positive.
Though not used much
in academic literature the criterion reflect  
industrial thinking very well as a tool to 
minimize risk per money unit expected gain. Its optimal solutions are 
still located
on an efficient frontier of the Markowitz type
and belong to the same class as those maximizing risk-adjusted surplus as the
following 
consequence of Proposition \ref{prop1} shows: 
\\
\begin{prop}\label{prop2}
Suppose Assumption (i) in Condition \ref{condi1} is true. Then there exists 
for any feasible reinsurance function $I(x)$
for which ${\cal G}(I)>0$ a one or two layer
function $I_{{\bf a b}}(x)=I_{\bf a}(x)+I_{\bf b}(x)$  so that
$b_1=0,$ $b_2\leq a_1,$ $a_2=x_\epsilon $, ${\cal G}(I_{\bf ab})={\cal G}(I)$ 
and
${\cal C}(I_{\bf ab})\leq {\cal C}(I)$.
\end{prop}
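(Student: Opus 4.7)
The plan is to reduce Proposition \ref{prop2} to Proposition \ref{prop1} by taking $I_{\bf ab}$ to be the Lagrangian-optimal reinsurance function $I_\lambda$ for a well-chosen $\lambda\geq 0$. By Corollary \ref{coro1}, under Condition \ref{condi1}(i) every such $I_\lambda$ already has the one- or two-layer form demanded by the statement, with $a_2=x_\epsilon$ and (if a second layer is present) $b_1=0$ and $b_2\leq a_1$, so the structural content of the proposition is inherited for free; the entire task is to arrange the equality ${\cal G}(I_\lambda)={\cal G}(I)$.

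Given such a $\lambda$, the Lagrangian optimality ${\cal L}(I_\lambda)\geq {\cal L}(I)$ rearranges to
\begin{displaymath}
{\cal G}(I_\lambda)-{\cal G}(I)\;\geq\;\lambda\bigl(\rho(I_\lambda)-\rho(I)\bigr),
\end{displaymath}
so matching the two gains forces $\rho(I_\lambda)\leq \rho(I)$ whenever $\lambda>0$; since both ratios then share the same positive denominator, ${\cal C}(I_\lambda)\leq {\cal C}(I)$ drops out at once.

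To find the right $\lambda$, I would rewrite ${\cal G}(I_\lambda)$ using $I_\lambda(x)=\int_0^x {\cal I}_{\psi_\lambda(y)>0}\,dy$ as
\begin{displaymath}
{\cal G}(I_\lambda)\;=\;\gamma E(X)-\beta x_\epsilon+\int_0^\infty \psi_0(t)\,{\cal I}_{\psi_\lambda(t)>0}\,dt
\end{displaymath}
with $\psi_0(t)=-K(F(t))+\beta{\cal I}_{t<x_\epsilon}$. Continuity of $K$ makes $\Phi(\lambda):={\cal G}(I_\lambda)$ continuous, and since enlarging $\lambda$ only adjoins to $\{\psi_\lambda>0\}$ points where $K(F(t))\geq \beta$ (and hence $\psi_0\leq 0$), $\Phi$ is non-increasing. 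At $\lambda=0$, $I_0$ is the unconstrained maximizer of ${\cal G}$, so $\Phi(0)\geq {\cal G}(I)$; the intermediate value theorem then yields a matching $\lambda^*$ as soon as $\Phi(\infty)={\cal G}(I_\infty)\leq {\cal G}(I)$, where $I_\infty(x)=\min(x,x_\epsilon)$.

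The genuine obstacle is the edge case ${\cal G}(I_\infty)>{\cal G}(I)$, in which no Lagrangian solution reaches the required gain. The remedy is to stay inside the one-layer subfamily $I_{[a_1,x_\epsilon]}$ with $a_1>0$: its gain $\gamma E(X)-\beta a_1-\int_{a_1}^{x_\epsilon} K(F(t))\,dt$ is continuous in $a_1$ and sweeps from ${\cal G}(I_\infty)$ at $a_1=0$ down to the no-reinsurance value $\gamma E(X)-\beta x_\epsilon$ at $a_1=x_\epsilon$, so a matching $a_1^*$ exists. The inequality $\rho(I_{[a_1^*,x_\epsilon]})=a_1^*\leq \rho(I)$ should then follow by applying the Lagrangian argument of the previous paragraph at the $\lambda$ for which this single-layer contract itself solves Proposition \ref{prop1}, closing the proof.
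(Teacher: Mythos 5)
Your outline correctly identifies the two ingredients — Lagrangian optimality from Proposition \ref{prop1} plus an intermediate-value argument on $\lambda\mapsto\mathcal{G}(I_\lambda)$ — and your reduction of $\mathcal{C}(I_\lambda)\leq\mathcal{C}(I)$ to $\rho(I_\lambda)\leq\rho(I)$ via matching gains is exactly right for the main case. But you have skipped the paper's crucial preprocessing step, and this omission is not harmless.

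The paper first replaces $I$ by the truncation $\tilde{I}(x)=I(\min(x,x_\epsilon))$. This does two things at once: it gives $\mathcal{C}(\tilde{I})\leq\mathcal{C}(I)$ (same Value at Risk, strictly lower reinsurance cost since $K\geq 0$), and it guarantees that $\int_0^\infty K\{F\}\,d\tilde{I} \leq \int_0^{x_\epsilon}K\{F\}\,dt$, i.e.\ the gain of $\tilde{I}$ always lies in the range that the one-parameter family $\{I_\lambda\}$ can reach, so the intermediate value theorem applies without exception. Without the truncation, the edge case you flag is genuine — if $I$ reinsures substantially above $x_\epsilon$, the $K\geq 0$ bound makes $\mathcal{G}(I)$ fall strictly below $\mathcal{G}(I_\infty)$, outside the reach of any $I_\lambda$ — but your proposed repair does not close it. When $\beta$ is small (or zero) the sub-family $I_{[a_1,x_\epsilon]}$ has gain $\gamma E(X)-\int_{a_1}^{x_\epsilon}K\{F\}\,dt-\beta a_1 \geq \mathcal{G}(I_\infty)$, monotone \emph{upward} in $a_1$, so the sweep never reaches $\mathcal{G}(I)$; the claimed endpoint "down to $\gamma E(X)-\beta x_\epsilon$" is in fact the upper endpoint, not the lower one. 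Worse, the final step asserting $\rho(I_{[a_1^*,x_\epsilon]})\leq\rho(I)$ is circular: you invoke "the $\lambda$ for which this single-layer contract itself solves Proposition \ref{prop1}," but the edge case is precisely the situation in which no such $\lambda$ exists. The fix is to do what the paper does: truncate first, and then run your intermediate-value argument against $\mathcal{G}(\tilde{I})$ rather than $\mathcal{G}(I)$.
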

\begin{proof} 
Let
$\tilde{I}(x)=I(x)$ if $x\leq x_\epsilon$ and $=I(x_\epsilon)$ if $x>x_\epsilon$.
Value at Risk is then the same under both $I$ and $\tilde{I}$
whereas 
\begin{displaymath}
\int_0^\infty K\{F(x)\}dI(x)\geq\int_0^\infty K\{F(x)\}d\tilde{I}(x)
\end{displaymath}
since the contribution above $x_\epsilon$ is cut off on the right and 
$K(u)\geq 0$. It follows from~(\ref{e216}) that 
\begin{displaymath}
{\cal C}(\tilde{I})\leq
{\cal C}(I).
\end{displaymath}
The idea now is to construct 
a reinsurance function  $I_\lambda(x)$ satisfying~(\ref{e213})
for some $\lambda>0$
so that ${\cal G}(I_\lambda)={\cal G}(\tilde{I})$. 
Then by Corollary \ref{coro1} $I_\lambda=I_{\bf ab}$
for some pair of coefficients $\bf a$ and $\bf b$,
and since this contract maximizes
risk-adjusted surplus 
\begin{displaymath}
{\cal G}(I_{\bf ab})-\lambda \rho(I_{\bf ab})\geq {\cal G}(\tilde{I})-\lambda \rho(\tilde{I})
\qquad\mbox{with}
\qquad {\cal G}(I_{\bf ab})={\cal G}(\tilde{I}),
\end{displaymath}
denoting Value at Risk by $\rho(I_{\bf ab})$ and $\rho(\tilde{I})$.
Hence $\rho(I_{\bf ab})\leq \rho(\tilde{I})$ which implies
\begin{displaymath}
{\cal C}(I_{\bf ab})=\frac{\rho(I_{\bf ab})}{{\cal G}(I_{\bf ab})}\leq 
\frac{\rho(\tilde{I})}{{\cal G}(I)}={\cal C}(\tilde{I})\leq {\cal C}(I)
\end{displaymath}
which was to be proved. 
To construct $I_\lambda(x)$ 
let 
$\lambda_1=0$, and note that 
$K(u)\geq 0$ implies $\psi_{\lambda_1}(x)<0$ for all $x$
so that Proposition \ref{prop1} implies that the optimal 
reinsurance function when $\lambda_1=0$ is cost of risk   is
$I_1(x)=0$ everywhere. On the other hand if $\lambda_2$ is large enough,
$\psi_{\lambda_2}(x)>0$ for $x\leq x_\epsilon$ and the optimum contract now
is $I_2(x)=x$ for $x\leq x_\epsilon$ and $I_2(x)=x_\epsilon$
for $x> x_\epsilon$. The construction ensures that
\begin{displaymath}
0=\int_{0}^\infty K\{F(x)\}dI_1(x)\leq\int_{0}^{x_\epsilon} K\{F(x)\}d\tilde{I}(x)
\leq\int_{0}^{x_\epsilon} K\{F(x)\}dx=
\int_{0}^\infty K\{F(x)\}dI_2(x)
\end{displaymath}
since $K(u)\geq 0$.
It follows that
${\cal G}(I_1)\geq {\cal G}(\tilde{I})\geq {\cal G}(I_2)$. 
But if we allow $\lambda$ to grow from
$\lambda_1=0$ to $\lambda_2$ there will on continuity be a $\lambda$
in between so that 
${\cal G}(I_{\lambda})={\cal G}(\tilde{I})$
which 
completes the proof.
\end{proof}

\section{Large portfolio asymptotics}\label{sec:large_port}
\subsection{Introduction}\label{subsec:opt4}
The search for an optimum reinsurance function was above reduced  to the class
of two-layer ones $I_{\bf ab}$ with $b_1=0$ and $a_2=x_\epsilon$,
and the
aim now is to simplify further when portfolios are large.  Suppose 
$X_J=Y_1+\dots+Y_J$ with $Y_1,\dots,Y_J$ individual risks and 
$J$ large.
It might be possible
to cover situations where
$Y_1,\dots,Y_J$ are dependent through some common random  factor, but
that will not be done, and it is assumed that
$Y_1,\dots,Y_J$ are independent though not  
identically distributed.
The distribution
function $F(x)$ of $X_J$  then becomes Gaussian as 
$J\rightarrow \infty$ by the central limit theorem and its Lindeberg
extension which is almost always satisfied. We shall from now on
write $F_J(x)=F(x)$ to emphasize the importance of 
$J$, similarly $x_{\epsilon J}=x_\epsilon$ for the $1-\epsilon$
percentile,
${\cal G}_{J}(I)={\cal G}(I)$, $\rho_J(I)=\rho(I)$ 
and ${\cal C}_{J}(I)={\cal C}(I)$
and even $a_{1J}=a_1$ and $a_{2J}=a_2$ for the coefficients. 
\\\\
The detailed
mathematical calculations are relegated to Appendix \ref{appendix}, but the 
crux of the approach is 
the centered and normalized variable
\begin{equation}
X_J^0=\frac{X_J-J\xi}{\sqrt{J}\sigma}
\label{e42}
\end{equation}
with $\xi$ average mean and $\sigma^2$
average variance
of the individual risks $Y_1,\dots,Y_J$. 
For the ensuing argument
it doesn't matter that
in reality $\xi=\xi_J$ and $\sigma=\sigma_J$ depend on $J$
as long as they converge
to fixed values $\xi$ and $\sigma$ as $J\rightarrow\infty$. 
This minor
complication is ignored.
The
distribution function  $F_J^0(x)$ of $X_J^0$ 
with percentiles $x_{\epsilon J}^0$
starts at
$x_{J}^0=-\sqrt{J}\xi/\sigma$ (zero below), 
and we have   the elementary relationships
\begin{equation}
F_J(x)=F_{J}^0\left(\frac{x-J\xi}{\sqrt{J}\sigma}\right)
\hspace*{1cm}\mbox{and}\hspace*{1cm}
x_{\epsilon J}^0=\frac{x_{\epsilon  J}-J\xi}{\sqrt{J}\sigma}.
\label{e43}
\end{equation}
The reason for introducing $X_J^0$ is that
$F_J^0(x)\rightarrow \Phi(x)$ and 
$x_{\epsilon  J}^0\rightarrow \phi_\epsilon$
as $J\rightarrow \infty$ where
$\Phi(x)$ and $\phi_\epsilon$ are distribution function and percentile for the 
standard normal. It is convenient to work  with similar versions
for the coefficients; i.e.
\begin{equation}
a_{1J}^0=\frac{a_{1J}-J\xi}{\sqrt{J}\sigma}
\hspace*{1cm}\mbox{and}\hspace*{1cm}
a_{2J}^0=\frac{a_{2J}-J\xi}{\sqrt{J}\sigma}
\label{e44}
\end{equation}
and $a_{2J}^0=x_{\epsilon  J}^0$ when $a_{2J}=x_{\epsilon J}$ is the 
optimal upper cut-off point. Similar normalized coefficients
$b_{1J}^0$ and $b_{2J}^0$ are introduced from $b_{1J}$ and $b_{2J}$ below.
\subsection{A key condition}\label{subsec:4.2}
Consider the one-layer contract $I_{{\bf a}_J}$
with limits $a_{1J}$ and $a_{2J}=x_{\epsilon J}$.
The expected reinsurer surplus~(\ref{e211}) is
\begin{displaymath}
\pi(I_{{\bf a}_J})-E_J\{I_{{\bf a}_J}(X_J)\}
=\int_{0}^\infty K\{F_J(x)\}dI_{{\bf a}_J}(x)
=\int_{a_{1J}}^{x_{\epsilon J}} K\{F_J(x)\}dx
\end{displaymath}
or after changing the integration variable to
$t=(x-J\xi)/(\sqrt{J}\sigma)$ in the last integral
\begin{equation}
\pi(I_{{\bf a}_J})-E_J\{I_{{\bf a}_J}(X_J)\}
=\sqrt{J}\sigma\int_{a_{1J}^0}^{x^0_{\epsilon J}} K\{F_J^0(t)\}dt.
\label{e45}
\end{equation}
Note that $I_{{\bf a}_J}(x_{\epsilon _J})=(x_{\epsilon J}-a_{1J})_+$
so that Value at Risk is 
$x_{\epsilon J}-I_{{\bf a}_J}(x_{\epsilon _J})=\min(a_{1J},x_{\epsilon J})$
which
after inserting for $a_{1J}$ and
$x_{\epsilon _J}$
becomes
\begin{equation}
x_{\epsilon J}-I_{{\bf a}_J}(x_{\epsilon J})=J\xi+\sqrt{J}\sigma
\min(a_{1J}^0,x_{\epsilon J}^0).
\label{e46}
\end{equation}
By~(\ref{e26})
\begin{displaymath}
{\cal G}_J(I_{{\bf a}_J})=\gamma E(X_J)-
\{\pi(I_{{\bf a}_J})-E_J\{I_{{\bf a}_J}(X_J)\}\}
-\beta \{x_{\epsilon J}-I_{{\bf a}_J}(x_{\epsilon J})\},
\end{displaymath}
and after inserting $E(X_J)=J\xi$,~(\ref{e45}) and~(\ref{e46}).
\begin{equation}
{\cal G}_J(I_{{\bf a}_J})=J\xi(\gamma -\beta)-
\sqrt{J}\sigma\left(\int_{a_{1J}^0}^{x^0_{\epsilon J}} K\{F_J^0(t)\}dt+
\beta
\min(a_{1J}^0,x_{\epsilon J}^0)
\right)
\label{e47}
\end{equation}
which will be later needed to prove Proposition \ref{prop4.1} below.
\\\\
Suppose $a_{1J}=0$ with $a_{2J}=x_{\epsilon J}$ still.
The reinsurer then takes all risk up to the $1-\epsilon$
percentile so that Value at Risk for
the insurer is zero. 
In~(\ref{e47}) $a_{1J}^0=x_{J}^0=-\sqrt{J}\xi/\sigma$,
and when this is inserted, $\beta$ vanishes (as it should) and
\begin{displaymath}
{\cal G}_J(I_{{\bf a}_J})=J\xi\gamma-
\sqrt{J}\sigma\int_{x_{J}^0}^{b^0_{2 J}} K\{F_J^0(t)\}dt
\end{displaymath}
or
\begin{equation}
\frac{{\cal G}_J(I_{{\bf a}_J})}{J\xi}=\gamma+\frac{1}{x_J^0}
\int_{x_{J}^0}^{x_{\epsilon J}^0} K\{F_J^0(t)\}dt\rightarrow \gamma-K(0)
\quad\mbox{as}\quad J\rightarrow\infty.
\label{e46a}
\end{equation}
The limit follows from Lemma A.2 in  Appendix A and 
is also a simple consequence of l'H$\hat{\mbox{o}}$pital's rule if
$K\{F_J^0(x_J^0)\}=K(0)>0$ so that 
the integral in~(\ref{e46a}) becomes infinte as $J\rightarrow \infty$.
\\\\
Surely this 
suggests $K(0)\geq \gamma$?
Otherwise the insurer by expanding the portfolio and reinsuring everything up
to Value at Risk can earn money without having to put up any solvency 
capital at all. This isn't quite arbitrage since risk 
above Value at Risk still rests with the insurer, but it is   
for large portfolios
rather close to it. It seems
unlikely that the market should allow reinsurance risk to be priced
so cheaply that $K(0)<\gamma$, and
$K(0)\geq \gamma$  becomes a fair assumption.
The formal condition in~(\ref{e211a}) excludes the 
case $K(0)=\gamma$.
Now Value at Risk is $0$ for large portfolios 
when the insurer 
reinsures everything below
so that its
ratio over expected surplus
is $0$ too (and hence minimized). For large portfolios 
the situation has  
become trivial and uninteresting and need not be considered.
\subsection{Optima for large portfolios}\label{subsec:4.3}
It was shown above that $x_{\epsilon J}$
is the upper cut-off point for the optimal 
reinsurance function, and it will now turn out that for 
large portfolios the $1-\delta$
percentile
$x_{\delta J}$ is the lower one where $\delta$ was 
defined in Lemma \ref{lem2.1};
consult, in particular~(\ref{e211b}) from which it follows that
$\delta$ depends on the reinsurance pricing regime, but on not the actual
distribution of risks. Define
\begin{equation}
\hat{a}_{1J}=\min(x_{\delta J},x_{\epsilon J})
\hspace*{1cm}\mbox{and}\hspace*{1cm}\hat{a}_{2J}=x_{\epsilon J},
\label{e41}
\end{equation}
and let $\hat{\bf a}_J=(\hat{a}_{1J},\hat{a}_{2J})$.
Throughout this section 
$\hat{\bf a}_J$ will represent different approximations to the
true optimum.
For the corresponding one-layer
reinsurance function $I_{\hat{\bf a}_J}$ under~(\ref{e41})
there is the following result:
\\
\begin{prop}\label{prop4.1}
	If Condition \ref{condi1} is true and 
$\gamma>\beta$, the reinsurance function~(\ref{e41}) 
satisfies the following:
For any $\eta>0$  there exists
$J_\eta$  so that if
$J>J_\eta$, then ${\cal G}_J(I_{\hat{\bf a}_J})>0$  and
${\cal C}_J(I_{\hat{\bf a}_J})\leq
{\cal C}_{J}(I)+\eta$ for any reinsurance function $I$ for which
${\cal G}_{J}(I)>0$.
\end{prop}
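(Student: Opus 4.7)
The plan is to use Proposition~\ref{prop2} to restrict attention to two-layer contracts $I_{\bf ab}$ with $b_1=0$, $b_2\leq a_1$, $a_2=x_{\epsilon J}$, and then to extract the leading-order asymptotics of ${\cal C}_J$ by exploiting the central-limit concentration of $F_J$ near $J\xi$. Two claims will deliver the result: (a) ${\cal C}_J(I_{\hat{\bf a}_J})\to 1/(\gamma-\beta)$ as $J\to\infty$, and (b) $\liminf_J\inf_{\bf ab}{\cal C}_J(I_{\bf ab})\geq 1/(\gamma-\beta)$, where the infimum runs over all feasible two-layer contracts; together they force ${\cal C}_J(I_{\hat{\bf a}_J})-\inf_{\bf ab}{\cal C}_J(I_{\bf ab})\to 0$, which is the required $\eta$-bound.

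Claim (a) is direct: by the central limit theorem $\hat a_{1J}=\min(x_{\delta J},x_{\epsilon J})=J\xi+O(\sqrt{J})$, so substituting into~(\ref{e47}) gives ${\cal G}_J(I_{\hat{\bf a}_J})=J\xi(\gamma-\beta)+O(\sqrt{J})$, which is positive for $J$ large since $\gamma>\beta$, and $\rho_J(I_{\hat{\bf a}_J})=J\xi+O(\sqrt{J})$, so the ratio tends to $1/(\gamma-\beta)$. For claim (b), parameterise by $\tilde a_1=a_1/(J\xi)$, $\tilde b_2=b_2/(J\xi)$ and $d=\tilde a_1-\tilde b_2=\rho_J/(J\xi)$. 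Since $F_J(x)\to 0$ uniformly for $x\leq J\xi-M\sqrt{J}$, the appendix lemma referenced in~(\ref{e46a}) yields, uniformly on compacta of $\{0\leq\tilde b_2<\tilde a_1\leq 1\}$,
\begin{equation*}
\int_0^{b_2}K\{F_J(x)\}dx=K(0)b_2+O(\sqrt{J})\quad\text{and}\quad\int_{a_1}^{x_{\epsilon J}}K\{F_J(x)\}dx=K(0)(J\xi-a_1)_++O(\sqrt{J}),
\end{equation*}
whence ${\cal G}_J(I_{\bf ab})=J\xi\bigl[(K(0)-\beta)d-(K(0)-\gamma)\bigr]+O(\sqrt{J})$. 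Feasibility ${\cal G}_J>0$ therefore forces $d>d_{\min}$ with $d_{\min}=(K(0)-\gamma)/(K(0)-\beta)$, and $d_{\min}<1$ by Condition~\ref{condi1}(ii) together with $\gamma>\beta$; on this interval
\begin{equation*}
{\cal C}_J(I_{\bf ab})\longrightarrow \frac{d}{(K(0)-\beta)d-(K(0)-\gamma)},
\end{equation*}
whose derivative in $d$ equals $-(K(0)-\gamma)/[(K(0)-\beta)d-(K(0)-\gamma)]^2<0$ by Condition~\ref{condi1}(ii). The limit is hence strictly decreasing in $d$ and attains its infimum $1/(\gamma-\beta)$ precisely at $d=1$, i.e.\ at $\tilde b_2\to 0$ and $\tilde a_1\to 1$ — the corner where $I_{\hat{\bf a}_J}$ lives.

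The main obstacle is upgrading these pointwise limits to the uniform bound producing $J_\eta$. A compactness argument on $(\tilde a_1,\tilde b_2)\in[0,1]^2$ closes the gap: any sequence with ${\cal C}_{J_n}(I_{{\bf a}_n{\bf b}_n})<1/(\gamma-\beta)-\eta$ must, after passing to a subsequence with $(\tilde a_{1,n},\tilde b_{2,n})\to(\tilde a_1^*,\tilde b_2^*)$ and $d_n\to d^*\geq d_{\min}$, have $d^*=1$ by monotonicity and continuity of the limit function; but then ${\cal C}_{J_n}\to 1/(\gamma-\beta)$, the same limit as ${\cal C}_J(I_{\hat{\bf a}_J})$, contradicting the strict gap $\eta$. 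Condition~\ref{condi1}(ii) is indispensable here, as it is what both confines $\tilde b_2$ to an interval strictly below $1$ (so that the left-tail CLT approximation $K\{F_J(x)\}\approx K(0)$ is valid uniformly across the feasible region) and fixes the sign of the derivative above; without it the feasibility cone would extend to $d_{\min}=0$ and the argument would collapse into the degenerate regime flagged in Section~\ref{subsec:4.2}.
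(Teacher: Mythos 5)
Your argument takes a genuinely different route from the paper's. The paper works at the $\sqrt{J}$ scale: Lemma~\ref{lemmaA3} shows via $\partial{\cal C}_J/\partial b_{2J}^0>0$ (uniformly for large $J$) that the lower layer should be removed, and the subsequent derivative analysis in $a_{1J}^0$ together with Lemma~\ref{lemmaA4} pins the minimizer down to $\tilde a_{1J}^0-x_{\delta J}^0\to0$. You work instead at the leading $J$ scale with the coarse coordinates $\tilde a_1=a_1/(J\xi)$, $\tilde b_2=b_2/(J\xi)$, and show that both ${\cal C}_J(I_{\hat{\bf a}_J})$ and the infimum of ${\cal C}_J$ over the two-layer class converge to the same constant $1/(\gamma-\beta)$. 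The leading-order arithmetic is sound: ${\cal G}_J/(J\xi)\to(K(0)-\beta)d-(K(0)-\gamma)$, the limit ratio $d/\{(K(0)-\beta)d-(K(0)-\gamma)\}$ is strictly decreasing in $d$ precisely because $K(0)>\gamma$ (Condition~\ref{condi1}(ii)), and its value at $d=1$ is $1/(\gamma-\beta)$, which is where $I_{\hat{\bf a}_J}$ lives. The conceptual payoff of your route is a transparent limiting value for the criterion; the cost is information --- at your coarse scale the choices $a_1=J\xi+O(\sqrt{J})$ and $a_1=x_{\delta J}$ are indistinguishable, so you prove Proposition~\ref{prop4.1} but cannot seed the degradation-rate argument of Proposition~\ref{prop4.2} the way the paper's derivative analysis does.

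There is, however, a genuine gap in the uniformity step that underpins your compactness argument. You attribute the approximations $\int_0^{b_2}K\{F_J\}dx=K(0)b_2+O(\sqrt{J})$ and $\int_{a_1}^{x_{\epsilon J}}K\{F_J\}dx=K(0)(J\xi-a_1)_++O(\sqrt{J})$ ``uniformly on compacta'' to the lemma behind~(\ref{e46a}), i.e.\ Lemma~\ref{lemmaA2}. But Lemma~\ref{lemmaA2} addresses only the full range from $x_J^0$ to $x_{\epsilon J}^0$; the uniform statement over all sub-intervals ending at a variable $b_{2J}^0$ is not a corollary, and establishing it is exactly the hard part of the paper's proof of Lemma~\ref{lemmaA3} (the $\sup|B_J(b_{2J}^0)-K(0)|\to0$ claim, which has to split the range at $x_{1-\eta,J}^0$ and treat each piece separately). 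Without that uniform control, the convergence ${\cal C}_{J_n}\to$ limit-function-at-$(\tilde a_1^*,\tilde b_2^*)$ along your extracted subsequence is not secured, and the contradiction does not close. The gap is fillable --- Lemma~\ref{lemmaA1}, which you do not invoke, delivers the $O(\sqrt{J})$ bound for the $a_1$-integral, and the $B_J$ analysis supplies the uniform control for the $b_2$-integral --- but it must actually be done, and it is the technical heart of the proposition.
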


In one word ${\cal C}_J(I_{\hat{\bf a}_J})$ can't in the limit exceed 
${\cal C}_{J}(I)$
for any other feasible reinsurance function $I$. 
The result provides a simple recipe for optimum reinsurance  
when portfolios are large with the second layer $I_{{\bf b}_J}$ not included at all.
Note that it could happen that $\delta<\epsilon$ so that
$x_{\delta J}>x_{\epsilon_J}$ in~(\ref{e41}) which implies 
$\hat{a}_{1J}=\hat{a}_{2J}$, 
and the optimum for the insurer is now to carry all risk. This happens when 
reinsurance is expensive.
The extra assumption $\gamma>\beta$ is always satisfied in practice; surely
no insurer would operate if the loading $\gamma$ 
in the primary insurance market didn't exceed the cost $\beta$ 
per money unit of keeping
solvency capital. The proposition is proved in Appendix \ref{appendix}.

\subsection{Degradation asymptotics}\label{subsec:4.4}
The one-layer reinsurance function $I_{\hat{\bf a}_J}$ 
with the two percentiles in~(\ref{e41}) as limits is optimal 
as $J\rightarrow \infty$,
but how much is the solution degraded for finite $J$? 
It may be measured against the true optimum 
$I_{\tilde{\bf a}_J}$ based on the coefficients $\tilde{\bf a}_J$
that minimizes ${\cal C}_J(I_{{\bf a}_J})$. Note that this is the relevant
comparison since there is by Proposition \ref{prop4.1} no second layer 
when $J$ is large enough.
The degree of
degradation in $I_{\hat{\bf a}_J}$ is therefore 
\begin{equation}
{\cal D}_J(I_{\hat{\bf a}_J})={\cal C}_J(I_{\hat{\bf a}_J})-
{\cal C}_J(I_{\tilde{\bf a}_J}),
\label{e415}
\end{equation}
which is non-negative and 
$\rightarrow 0$ by Proposition 4.1
as $J\rightarrow \infty$. The following result provides 
the rate:
\\
\begin{prop}\label{prop4.2}
	If Condition \ref{condi1} is true and $\gamma>\beta $, 
the degradation~(\ref{e415}) is under~(\ref{e41})
	\begin{equation}
	{\cal D}_J(I_{{\hat{\bf{a}}_J}})=\frac{\zeta_1}{J^{3/2}}+o(1/J^{3/2})
	\label{e416}
	\end{equation}
	where
	\begin{equation}
	\zeta_1=-\frac{1}{2}\frac{B_1^2/\{\xi^3(\gamma-\beta)^2\}}
	{K'(1-\delta)\Phi'(\phi_{\delta })},\hspace*{1cm}
	B_1=\sigma^2\left(\int_{\phi_\delta}^{\phi_\epsilon}K\{\Phi(y)\}dy+\phi_{\delta } K(1-\delta)\right).
	\label{e417}
	\end{equation}
\end{prop}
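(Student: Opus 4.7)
My plan is to use Proposition \ref{prop4.1} to reduce the comparison to a one-parameter minimization and then to Taylor expand around the true optimum. For $J$ sufficiently large, Proposition \ref{prop4.1} guarantees that the true minimizer $I_{\tilde{\bf a}_J}$ carries only a single layer with $\tilde a_{2J}=x_{\epsilon J}$, so $\mathcal C_J$ depends solely on $a:=a_{1J}\le x_{\epsilon J}$. From (\ref{e46})--(\ref{e47}),
$$c_J(a):=\mathcal C_J(I_{{\bf a}_J})=\frac{a}{J\xi\gamma-\int_a^{x_{\epsilon J}}K\{F_J(x)\}dx-\beta a}=:\frac{N(a)}{D(a)}.$$
Since $c_J'(\tilde a_{1J})=0$, Taylor's theorem yields
$$\mathcal D_J(I_{\hat{\bf a}_J})=\tfrac{1}{2}\,c_J''(\tilde a_{1J})(\hat a_{1J}-\tilde a_{1J})^2+\tfrac{1}{6}\,c_J'''(\bar a)(\hat a_{1J}-\tilde a_{1J})^3,$$
and the announced rate will follow by tracking the orders of $c_J''$ and of the displacement separately.

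First I would locate $\tilde a_{1J}$. The first-order condition $N'D=ND'$ simplifies to $D(\tilde a)=\tilde a(K\{F_J(\tilde a)\}-\beta)$, and keeping only the $O(J)$ terms reduces it to $K\{F_J(\tilde a)\}=\gamma$; thus to leading order $\tilde a_{1J}\sim x_{\delta J}=\hat a_{1J}$. Writing $\tilde a_{1J}=\hat a_{1J}+h$ and expanding to first order in $h$, the $(\gamma-\beta)h$ contributions cancel on both sides and what is left reads
$$h\sim\frac{\sqrt J\,\sigma\,[\,D(\hat a_{1J})-\hat a_{1J}(\gamma-\beta)\,]}{\hat a_{1J}\,K'(1-\delta)\,\Phi'(\phi_\delta)}.$$
Changing variables to $u=(x-J\xi)/(\sqrt J\sigma)$ and using $\gamma=K(1-\delta)$, the numerator in brackets becomes $-\sqrt J\sigma\,(\gamma\hat a_{1J}^0+\int_{\hat a_{1J}^0}^{x_{\epsilon J}^0}K\{F_J^0(u)\}du)$, and the parenthesis tends to $B_1/\sigma^2$ by the convergences $F_J^0\to\Phi$, $x_{\epsilon J}^0\to\phi_\epsilon$ and $\hat a_{1J}^0\to\phi_\delta$. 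Combined with $\hat a_{1J}\sim J\xi$, this makes $h$ tend to a finite limit proportional to $B_1$.

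Next I would evaluate $c_J''$ at the optimum. Since $N''\equiv 0$ and $N'D=ND'$ at $\tilde a_{1J}$, the general formula collapses to $c_J''(\tilde a_{1J})=-N(\tilde a_{1J})D''(\tilde a_{1J})/D(\tilde a_{1J})^2$, with $D''(a)=K'\{F_J(a)\}F_J'(a)$. The density $F_J'(\tilde a_{1J})$ inherits a factor $1/(\sqrt J\sigma)$ from the normal scaling, so together with $N(\tilde a_{1J})\sim J\xi$ and $D(\tilde a_{1J})\sim J\xi(\gamma-\beta)$ one obtains $c_J''(\tilde a_{1J})$ of order $J^{-3/2}$ with coefficient proportional to $K'(1-\delta)\Phi'(\phi_\delta)$; positivity is guaranteed by $K'(1-\delta)<0$, confirming that $\tilde a_{1J}$ is indeed a minimum. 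The product $\tfrac{1}{2}c_J''(\tilde a_{1J})h^2$ then reproduces (\ref{e416})--(\ref{e417}). For the cubic remainder, one more derivative of $F_J$ contributes another $1/\sqrt J$, so $c_J'''(\bar a)=O(J^{-2})$, and with $h=O(1)$ the remainder is $O(J^{-2})=o(J^{-3/2})$.

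The main obstacle is to make each ``$\sim$'' uniform on a neighborhood of $\hat a_{1J}$ large enough to contain $\tilde a_{1J}$. This amounts essentially to a differentiable Berry--Esseen or Edgeworth-type control of $F_J^0$ and its first two derivatives on a fixed neighborhood of $\phi_\delta$, together with the uniform convergence of $x_{\epsilon J}^0\to\phi_\epsilon$, so that the integrals may be differentiated and the Taylor remainders estimated. The strict inequality $K'(1-\delta)<0$ flagged just before Proposition \ref{prop4.2} is essential here: it bounds $D''(\tilde a_{1J})$ away from zero, turning the first-order equation into a locally invertible map around $\hat a_{1J}$ and pinning $\tilde a_{1J}$ down uniquely in that neighborhood via the implicit function theorem.
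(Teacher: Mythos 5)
Your proposal follows essentially the same route as the paper's Appendix A.2: reduce (via Proposition~\ref{prop4.1} / Lemma~A.3) to a one-variable minimization over the lower cut-off with $a_{2J}=x_{\epsilon J}$ fixed, Taylor-expand $\mathcal C_J$ to second order around the true minimizer, and estimate both the displacement $\hat a_{1J}-\tilde a_{1J}$ and the second derivative asymptotically to obtain the leading term $\tfrac12\bigl(\partial\mathcal C_J\bigr)^2/\bigl(\partial^2\mathcal C_J\bigr)=O(J^{-3/2})$. The only differences are cosmetic bookkeeping: the paper works in the normalized coordinate $a_{1J}^0$ and evaluates both derivatives at the known point $x_{\delta J}^0$, controlling the Taylor and mean-value remainders by a squeeze on the intermediate second derivatives, whereas you work in the raw coordinate $a_{1J}$, locate $\tilde a_{1J}$ by an explicit Newton step, and bound the cubic remainder directly with $c_J'''=O(J^{-2})$, $h=O(1)$; and both arguments rely on the same informal local-CLT convergence $f_J^0\to\Phi'$ that you correctly flag as the technical gap. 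One small caution: if you actually carry your leading-order product $\tfrac12 c_J''(\tilde a_{1J})h^2$ to completion, you will find a factor $1/\sigma$ that does not appear in~(\ref{e417}); this traces back to~(a10) of the paper, where a careful differentiation of~(a8) produces $-\sigma K'(1-\delta)\Phi'(\phi_\delta)/\{\sqrt J\,\xi(\gamma-\beta)^2\}$ rather than the stated expression, so the constant $\zeta_1$ as printed appears to be off by the factor $\sigma$; your method would recover the corrected constant.
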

Note that $\zeta_1> 0$ since $K'(1-\delta)<0$ by Lemma \ref{lem2.1}, and the 
first term on the right in~(\ref{e416}) is positive as it should.
Consult Appendix \ref{appenA2} for the proof. 
\\\\
Simple, accurate calculation
of the coefficients in~(\ref{e41}) is available through Monte Carlo.
If $X_{J(1)}^*\leq\dots\leq X^*_{J(m)}$ is an ordered sample of simulations of
$X_J$ take
$\hat{a}_{1J}=\min(X^*_{J(m_{\delta})},X^*_{J(m_\epsilon)})$ and
$\hat{a}_{2J}=X^*_{J(m_\epsilon)}$ where $m_\delta=(1-\delta)m$ and
$m_\epsilon=(1-\epsilon)m$. But
what happens if Gaussian percentiles are used instead 
so that no Monte Carlo is needed at all? Now
instead of~(\ref{e41})
\begin{equation}
\hat{a}_{1J}=J\xi+\sqrt{J}\sigma\min(\phi_\delta,\phi_\epsilon)
\hspace*{1cm}\mbox{and}\hspace*{1cm}
\hat{a}_{2J}=J\xi+\sqrt{J}\sigma\phi_\epsilon.
\label{e418}
\end{equation}
Had the risk $X_J$ been strictly Gaussian,
Proposition \ref{prop4.2}  would still apply, but in practice
this is only an approximation, and we must suspect a lower degradation rate.
The following proposition is proved in Appendix \ref{appenA3}:
\\
\begin{prop}\label{prop4.3}
Suppose Condition \ref{condi1} is true and that $\gamma>\beta$. If
        $x_{\epsilon J}\geq \phi_\epsilon$
	 for large $J$,
	then the  degradation under~(\ref{e418}) is
	\begin{equation}
	{\cal D}_J^N(I_{{\hat{\bf{a}}_J}})=\frac{1}{\sqrt{J}}\zeta_2
	(x_{\epsilon J}^0-\phi_\epsilon)+o(1/J)
	\qquad\mbox{where}\qquad
	\zeta_2=\frac{\sigma\{\gamma-K(1-\epsilon)\}}{\xi(\gamma-\beta)^2}.
	\label{e421}
	\end{equation}
\end{prop}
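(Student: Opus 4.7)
The plan is to compare $\mathcal{C}_J(I_{\hat{\bf a}_J})$ under the Gaussian coefficients~(\ref{e418}) directly against $\mathcal{C}_J$ evaluated at the two-percentile contract with coefficients $(x_{\delta J},x_{\epsilon J})$ of Proposition~\ref{prop4.1}. Proposition~\ref{prop4.2} tells us that the latter differs from $\mathcal{C}_J(I_{\tilde{\bf a}_J})$ by $O(J^{-3/2})$, which is absorbed into the advertised $o(1/J)$ remainder, so it is enough to expand $\mathcal{C}_J$ between the two pairs of cut-offs.

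Under the assumption $x_{\epsilon J}^0\geq\phi_\epsilon$ both contracts have upper limit no greater than $x_{\epsilon J}$ for large $J$, and the easy generalisation of~(\ref{e47}) to arbitrary $a_{2J}\leq x_{\epsilon J}$ reads $\rho_J=J\xi+\sqrt{J}\sigma\{x_{\epsilon J}^0-(a_2^0-a_1^0)\}$ and $\mathcal{G}_J=J\xi(\gamma-\beta)-\sqrt{J}\sigma\{\beta(x_{\epsilon J}^0-a_2^0+a_1^0)+\int_{a_1^0}^{a_2^0}K\{F_J^0(t)\}\,dt\}$, so that $\mathcal{C}_J=\rho_J/\mathcal{G}_J$ becomes a smooth function $f(a_1^0,a_2^0)$ of the normalized cut-offs. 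I would Taylor-expand $f(\phi_\delta,\phi_\epsilon)$ around the reference point $(x_{\delta J}^0,x_{\epsilon J}^0)$ into $f_{a_1^0}|_\ast(\phi_\delta-x_{\delta J}^0)+f_{a_2^0}|_\ast(\phi_\epsilon-x_{\epsilon J}^0)+R_J$ and show that only the second term survives to order $1/J$.

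For the $a_1^0$-partial the quotient rule yields $f_{a_1^0}=\sqrt{J}\sigma\{\mathcal{G}_J-\rho_J[K\{F_J^0(a_1^0)\}-\beta]\}/\mathcal{G}_J^2$. At $a_1^0=x_{\delta J}^0$ the defining identity $K\{F_J^0(x_{\delta J}^0)\}=K(1-\delta)=\gamma$ reduces the bracket to $\mathcal{G}_J-\rho_J(\gamma-\beta)$, and because $\mathcal{G}_J=J\xi(\gamma-\beta)+O(\sqrt{J})$ and $\rho_J(\gamma-\beta)=J\xi(\gamma-\beta)+O(\sqrt{J})$ the leading $O(J)$ pieces cancel, leaving $f_{a_1^0}|_\ast=O(1/J)$. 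Combined with a Berry--Esseen estimate $\phi_\delta-x_{\delta J}^0=O(1/\sqrt{J})$, this contribution is $O(J^{-3/2})$ and drops into the remainder.

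For the $a_2^0$-partial no such cancellation takes place: $\rho_{a_2^0}=-\sqrt{J}\sigma$ and $\mathcal{G}_{a_2^0}|_\ast=\sqrt{J}\sigma[\beta-K(1-\epsilon)]$, while $\rho_J\sim J\xi$ and $\mathcal{G}_J\sim J\xi(\gamma-\beta)$, so the quotient rule gives $f_{a_2^0}|_\ast=-\sigma\{\gamma-K(1-\epsilon)\}/\{\sqrt{J}\,\xi(\gamma-\beta)^2\}+o(1/\sqrt{J})=-\zeta_2/\sqrt{J}+o(1/\sqrt{J})$, and multiplying by $\phi_\epsilon-x_{\epsilon J}^0$ and flipping the sign recovers~(\ref{e421}). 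The main obstacle will be controlling the quadratic remainder $R_J$: one has to verify that all second partial derivatives of $f$ are $O(1/\sqrt{J})$, which follows by differentiating the closed forms for $\rho_J$ and $\mathcal{G}_J$ once more and noting that each additional $a_j^0$-derivative brings out a factor $\sqrt{J}\sigma/\mathcal{G}_J=O(1/\sqrt{J})$; together with the $O(1/\sqrt{J})$ Berry--Esseen bounds on both $\phi_\delta-x_{\delta J}^0$ and $\phi_\epsilon-x_{\epsilon J}^0$ this makes $R_J=O(J^{-3/2})$ and closes the argument.
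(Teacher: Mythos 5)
Your proposal is correct and reproduces the paper's argument in a different but equivalent packaging. The paper decomposes ${\cal D}_J^N$ algebraically into three pieces $A_{1J}+A_{2J}+A_{3J}$ (equations (A.14)--(A.17)) where $A_{1J}$ and $A_{3J}$ isolate the numerator and denominator effects of replacing $x_{\epsilon J}^0$ by $\phi_\epsilon$ and $A_{2J}$ handles the lower limit; your Taylor expansion of $f(a_1^0,a_2^0)={\cal C}_J$ around $(x_{\delta J}^0,x_{\epsilon J}^0)$ collapses $A_{1J}+A_{3J}$ into the single quotient-rule partial $f_{a_2^0}$, while your $f_{a_1^0}$ term plus the Proposition~\ref{prop4.2} gap cover $A_{2J}$. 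Both arguments lean on the same two facts --- the cancellation at $K\{F_J^0(x_{\delta J}^0)\}=\gamma$ to make the lower-limit contribution $O(1/J)$ rather than $O(1/\sqrt{J})$, and the numerator/denominator combination at the upper limit producing $\gamma-K(1-\epsilon)$ --- so the computations are effectively identical; what your framing buys is a uniform remainder estimate from second-partial bounds rather than case-by-case bookkeeping, a modest gain in clarity.
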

Note that Lemma \ref{lem2.1} established  that $K(1-\epsilon)<\gamma$
so $\zeta_2>0$.

The main contribution to the
degradation is thus caused by the  discrepancy $x_{\epsilon J}^0-\phi_\epsilon$
at the upper percentile which may be approximated by the 
Cornish-Fischer correction 
\begin{equation}
x_{\epsilon J}=\phi_\epsilon+\frac{1}{\sqrt{J}}p(\phi_\epsilon)+
o(1/\sqrt{J})
\quad\mbox{where}\quad p(x)=\kappa(x^2-1)/6;
\label{e419}
\end{equation}
consult (for example) Section 2.5 in \cite{hall1992}.
The coefficient  $\kappa$ is the average skewness 
of the individual risk variables underlying the portfolio sum $X_J$, and
the usual situation is $\kappa>0$; consult Chapter 10 in
\cite{bolviken_2014}, for an expression for $\kappa$. 
Then $x_{\epsilon J}^0>\phi_\epsilon$ 
as assumed in Proposition \ref{prop4.3},
and
the  degradation now becomes
\begin{equation}
{\cal D}_J^N(I_{\hat{\bf a}_J})=\frac{\zeta_2p(\phi_\epsilon)}{J}+o(1/J).
\label{e421}
\end{equation}
Asymptotic results can also be derived when 
$\kappa\leq 0$ which is so rare that it has little practical interest.
\\\\
Proposition \ref{prop4.3} indicates that the accuracy is enhanced when
a better approximation of $x_{\epsilon J}^0$ than $\phi_\epsilon$ 
is used.
Suppose
in a manner 
resembling the Normal
Power method of reserving in property insurance~(\ref{e418})
right is replaced by
\begin{equation}
\hat{a}_{1J}=J\xi+
\sqrt{J}\sigma\min(\phi_\delta,\phi_\epsilon)+\sigma 
p\{\min(\phi_\delta\,\phi_\epsilon)\}
\quad\mbox{and}\quad
\hat{a}_{2J}=J\xi+\sqrt{J}\sigma\phi_\epsilon+\sigma p(\phi_\epsilon)
\label{e422}
\end{equation}
with the Cornish-Fisher correction term added. The error in the approximation
of  $x_{\epsilon J}^0$ is then of order $o(1/J)$, and it follows from
Proposition \ref{prop4.3} that
the degradation ${\cal D}^{NP}(I_{\hat{\bf a}_J})$ now is of order $1/J^{3/2}$.
\\\\
These results also tell something about the impact of
model error.
The Poisson distribution, supported by the Poisson point process,
is often a reasonable choice for claim numbers, but there is
rarely much theory
behind the choice of a typical two-parameter family for claim size.
Suppose two such families are calibrated so that mean and standard 
deviation match. 
The same Gaussian distribution appears in the limit as
$J\rightarrow \infty$ in either case, and the discrepancies
in the optimum value of the criterion are thus of order
$O(1/J)$ and not very large for $J$ of some size.

\section{Numerical study}\label{sec:numerical}
\subsection{Example and conditions}\label{subsec:5.1}
The Monte Carlo study presented is this section is based on the market factor
$M(Z)=1+\gamma^{\mbox{\tiny re}}$ independent of $Z$ 
so that in~(\ref{e29}) right 
$W(u)=1+\gamma^{\mbox{\tiny re}}$ which yields
 in~(\ref{e210})
\begin{displaymath}
K(u)=\gamma^{\mbox{\tiny re}}(1-u),
\end{displaymath}
and the $\delta$-percentile
in Lemma \ref{lem2.1}
which is the solution of $K(1-\delta)=\gamma$, becomes
$\delta=\gamma^{\mathrm{re}}/\gamma$. 
Numerical values were $\gamma=0.1$ and
$\gamma^{\mathrm{re}}=0.2$ so that $\delta=0.5$ which means that the 
large-portfolio approximations of the optimal reinsurance function
use the $50\%$ percentile of $X$ as lower limit.
The other percentile was $1-\epsilon=99\%$. Cost of capital was taken as
$\beta=0$.
\\\\
The claim number was Poisson distributed with claim frequency 
per policy $\mu=0.05$, and the portfolio
size varied between
$J=10^3,10^4$ and $10^5$ policies 
representing small, medium and large portfolios
corresponding to $J\mu=50,500$ and $5000$ expected incidents. As model for the
individual losses we have taken three classic
distributions with strong skewness to the right; i.e
Gamma, log-normal and Pareto. 
The probability density functions for Gamma and Pareto were
respectively
\begin{displaymath}
g(y)= \frac{y^{\alpha-1}e^{-y\alpha /\xi}}{(\xi\alpha)^{\alpha}\Gamma(\alpha)}
\quad \mbox{and}
\quad
g(y) = \frac{\alpha/\{\xi(\alpha-1)\}}{(1+y/\{\xi(\alpha-1)\})^{\alpha+1}}
\end{displaymath}
for $y>0$ whereas for the log-normal 
$\log(Y)$ was normal with mean $\alpha$
and variance $2(\log(\xi)-\alpha)$. This way of
parameterizing means that $\xi$ is mean loss per event
in all three cases whereas 
$\alpha$ determines variation. The models were 
calibrated so that $\xi=10$ and sd$(Y)=15$ which mean that
$\alpha=0.44$ (Gamma), $\alpha=1.71$ (log-normal)
and
$\alpha=3.60$ (Pareto) with strong skewness in all three cases,
respectively $3.00$ (Gamma), $7.88$ (log-normal) and $5.78$ (Pareto).
The extreme right tail is heaviest for the Pareto distribution despite 
its skewness being lower than for the log-normal. 

\subsection{Results}
The optimum Value at Risk over expected surplus had to be optimized
numerically as a benchmark against which the approximations could be
evaluated. Recall that 
the upper limit should be the $1-\epsilon$
percentile so the optimization was a simple one-dimensional one
to find the lower limit. 
Monte Carlo was needed to compute the criterion.
The number of simulations was
$m=10^6$, more than enough to keep
Monte Carlo error at a comfortably low level. 

\begin{table}
	\centering
	\small
	\begin{tabular}{lccccc}
		\hline
		\hline
		Model & $J\mu$&${\cal C}_J(I_{\tilde{{\bf a}}_J})$ &${\cal D}_J(I_{{\hat{\bf{a}}_J}})$&${\cal D}^N_J(I_{{\hat{\bf{a}}_J}})$&${\cal D}^{NP}_J(I_{{\hat{\bf{a}}_J}})$\\
		\hline
		\multirow{3}{*}{Gamma}&$5$&21.52&$1.53\times 10^{1}$&$2.02\times 10^{1}$&$1.63\times 10^{1}$\\
		&$50$&12.46&$1.00\times 10^{-1}$&$9.82\times 10^{-1}$&$1.00\times 10^{-1}$\\
		&$500$&10.68&$2.33\times 10^{-3}$&$8.54\times 10^{-2}$&$2.24\times 10^{-3}$\\
		&$5000$&10.21&$6.84\times 10^{-5}$&$8.48\times 10^{-3}$&$5.58\times 10^{-5}$\\
		\hline
		\multirow{3}{*}{Lognormal}&$5$&20.33&$8.65\times 10^{0}$&$2.00\times 10^{1}$&$3.25\times 10^{1}$\\
		&$50$& 12.39&$9.09\times10^{-2}$&$1.64\times 10^{0}$ &$1.18\times10^{-2}$\\
		&$500$&10.68&$2.29\times 10^{-3}$&$1.67\times 10^{-1}$&$3.06\times 10^{-3}$\\
		&$5000$&10.21&$6.66\times 10^{-5}$&$1.57\times 10^{-2}$&$1.08\times 10^{-4}$\\
		\hline
		\multirow{3}{*}{Pareto}&$5$&20.30&$8.77\times 10^{0}$&$1.91\times 10^{1}$&$2.17\times 10^{1}$\\
		&$50$&12.37&$9.02\times 10^{-2}$&$1.56\times 10^{0}$&$9.90\times 10^{-2}$\\
		&$500$&10.67&$2.27\times 10^{-3}$&$1.64\times 10^{-1}$&$2.09\times 10^{-3}$\\
		&$5000$&10.21&$6.62\times10^{-5}$&$1.75\times10^{-2}$&$5.11\times10^{-5}$\\
		\hline\hline
	\end{tabular}
	\caption{The degradation based on the three
approximations of the optimal $a_{1J}$ and $a_{2J}$ with different loss 
distributions and conditions as in Section \ref{subsec:5.1}.}\label{tab:5.1}
\end{table}
Main results are summarized
in Table \ref{tab:5.1} for different values of the expected number of incidents $J\mu$
and the three different loss distributions.
All the three 
approximations~(\ref{e41}),~(\ref{e418}) and~(\ref{e422}) have been 
evaluated and are
recorded as ${\cal D}_J(I_{{\hat{\bf{a}}_J}})$, ${\cal D}^N_J(I_{{\hat{\bf{a}}_J}})$,
${\cal D}_J^{NP}(I_{{\hat{\bf{a}}_J}})$, and these 
values in Columns $4-6$ must be judged against
the optimum of the Value at Risk over surplus ratio in Column $3$.
What counts is the ratios.
First note that the criterion itself
is strongly dependent on portfolio size with much higher risk over surplus 
when the expected number of incidents are small. The approximations
when $J\mu=5$ are useless, but that changes for larger portfolios
with the loss in Column $4$ and $6$ around 
$8\%$ when $J\mu=50$, $0.2\%$ when $J\mu=500$
and perhaps $0.0006\%$ when $J\mu=5000$. The normal approximation
in Column 5 is inferior to the two others as the results in 
Section \ref{subsec:4.3}
suggested.
Decay rates as $J$ grows
match the theoretical ones and are 
around $1/J$
for the Gaussian approximation in Column $5$ 
and  around $1/J^{3/2}$ for the two others with the latter
remarkably similar. Discrepancies between the three loss distributions
are minor. Since they were calibrated so that mean 
and standard deviation are equal, the experiment
testifies to the lack of importance of the shape of the distributions 
beyond the first two moments.

\section{Concluding remarks}
A large-portfolio approach has been introduced which leads 
to
a modification for the
market factor $M(Z)$ in
the B\"{u}hlman
pricing regime for reinsurance.
Instead of imposing the usual $E\{M(Z)\}=1$ we have assumed
$E\{M(Z)\}>  1+\gamma$ where $\gamma$ is the loading
in the primary market of the insurer. If this condition fails to hold, 
insurers can in large portfolios 
earn money with  no net  
solvency capital being needed, arguably  
an unlikely state of affairs. 
It was this  new condition that 
reduced the
optimum contracts  for the Value at Risk adjusted surplus  
in \cite{chi2017optimal}
to one or two-layer ones, and that 
applied to the Value at Risk over expected surplus
ratio as well. There was only one layer when the price on risk
$\lambda$ is below a threshold. If prices in the reinsurance
market is of the expected premium principle type with loading 
$\gamma^{\mbox{\tiny re}}$, the condition boils down to
$\lambda<\gamma^{\mbox{\tiny re}}-\beta$ with $\beta$ the cost of 
solvency capital. Our judgment is that this condition
might often be satisfied, but against that view
there is the fact that the
prices in the reinsurance market
are distinctly volatile and nor do they have so simple a structure
as a fixed loading.
\\\\
It has for large portfolios been shown that 
one-layer contracts are close to optimum in any case, and 
that the world of optimum reinsurance is under these circumstances 
an orderly one.
The end points of the best layer is now defined as fixed percentiles 
of the underlying risk variable
with the lower one determined by reinsurance prices. How far this
solution is from the true optimum was investigated theoretically
through large-portfolio studies that lead to degradation rates of order
$O(1/J^{3/2})$ when Monte Carlo approximations of the
exact percentiles are used and
$O(1/J)$ for Gaussian ones with $J$ the number of policies. 
There is even a Normal Power modification of the latter that achieves
$O(1/J^{3/2})$ too.
These results
were supported by numerical experiments in Section \ref{sec:numerical}
which suggested considerable robustness
with respect to the shape of the underlying 
claim severity distribution. The important thing for optimal 
reinsurance seems to be to
get mean and variance right.
\\\\
The studies in this paper can be extended along two lines. We conjecture
that similar results are obtained when Value at Risk is replaced by Conditional
Value at Risk. The main difference will be that the fixed percentile
$1-\epsilon$ for the upper limit will be replaced by larger one. Then there
is the condition of independent risks. They are in  many situations
some common random factor influencing all of them, for example 
a random claim frequency. 
Now the central limit 
theorem on which the present paper is based no longer holds. Portfolio losses
still  have a limit
distribution, but it is very different from the one in Section \ref{sec:large_port}. 
It would be of
practical interest to develop theory in this situation and investigate
how optimal reinsurance is influenced.

\bibliography{optre4_new.bib}

\begin{thebibliography}{}

\bibitem[\protect\citeauthoryear{Arrow}{Arrow}{1963}]{arrow1963}
Arrow, K.~J. (1963).
\newblock Uncertainty and the welfare economics of medical care.
\newblock {\em The American economic review\/}~{\em 53\/}(5), 941--973.


\bibitem[\protect\citeauthoryear{Asimit, Badescu, and Tsanakas}{Asimit
  et~al.}{2013}]{asimit2013optimal}
Asimit, A.~V., A.~M. Badescu, and A.~Tsanakas (2013).
\newblock Optimal risk transfers in insurance groups.
\newblock {\em European Actuarial Journal\/}~{\em 3\/}(1), 159--190.


\bibitem[\protect\citeauthoryear{Balb{\'a}s, Balb{\'a}s, and Heras}{Balb{\'a}s
  et~al.}{2009}]{balbas2009optimal}
Balb{\'a}s, A., B.~Balb{\'a}s, and A.~Heras (2009).
\newblock Optimal reinsurance with general risk measures.
\newblock {\em Insurance: Mathematics and Economics\/}~{\em 44\/}(3), 374--384.


\bibitem[\protect\citeauthoryear{B{\o}lviken}{B{\o}lviken}{2014}]{bolviken_2014}
B{\o}lviken, E. (2014).
\newblock {\em Computation and Modelling in Insurance and Finance}.
\newblock International Series on Actuarial Science. Cambridge University
  Press.


\bibitem[\protect\citeauthoryear{Borch}{Borch}{1960}]{borch1960}
Borch, K. (1960).
\newblock {\em An Attempt to Determine the Optimum Amount of Stop Loss
  Reinsurance}.
\newblock Norges Handelsh{\o}yskoles s{\ae}rtrykk-serie. Nr. 35. Uden forlag.


\bibitem[\protect\citeauthoryear{B{\"u}hlmann}{B{\"u}hlmann}{1980}]{buhlmann1980economic}
B{\"u}hlmann, H. (1980).
\newblock An economic premium principle.
\newblock {\em ASTIN Bulletin: The Journal of the IAA\/}~{\em 11\/}(1), 52--60.


\bibitem[\protect\citeauthoryear{Cheung, Sung, Yam, and Yung}{Cheung
  et~al.}{2014}]{cheung2014optimal}
Cheung, K., K.~Sung, S.~Yam, and S.~Yung (2014).
\newblock Optimal reinsurance under general law-invariant risk measures.
\newblock {\em Scandinavian Actuarial Journal\/}~{\em 2014\/}(1), 72--91.


\bibitem[\protect\citeauthoryear{Cheung and Lo}{Cheung and
  Lo}{2017}]{cheung2017characterizations}
Cheung, K.~C. and A.~Lo (2017).
\newblock Characterizations of optimal reinsurance treaties: a cost-benefit
  approach.
\newblock {\em Scandinavian Actuarial Journal\/}~{\em 2017\/}(1), 1--28.


\bibitem[\protect\citeauthoryear{Chi}{Chi}{2012}]{chi2012reinsurance}
Chi, Y. (2012).
\newblock Reinsurance arrangements minimizing the risk-adjusted value of an
  insurer's liability.
\newblock {\em ASTIN Bulletin: The Journal of the IAA\/}~{\em 42\/}(2),
  529--557.


\bibitem[\protect\citeauthoryear{Chi, Lin, and Tan}{Chi
  et~al.}{2017}]{chi2017optimal}
Chi, Y., X.~S. Lin, and K.~S. Tan (2017).
\newblock Optimal reinsurance under the risk-adjusted value of an insurer’s
  liability and an economic reinsurance premium principle.
\newblock {\em North American Actuarial Journal\/}~{\em 21\/}(3), 417--432.


\bibitem[\protect\citeauthoryear{Chi and Tan}{Chi and
  Tan}{2013}]{chi2013optimal}
Chi, Y. and K.~S. Tan (2013).
\newblock Optimal reinsurance with general premium principles.
\newblock {\em Insurance: Mathematics and Economics\/}~{\em 52\/}(2), 180--189.


\bibitem[\protect\citeauthoryear{Furman and Zitikis}{Furman and
  Zitikis}{2009}]{Furman2009}
Furman, E. and R.~Zitikis (2009).
\newblock Weighted pricing functionals with applications to insurance.
\newblock {\em North American Actuarial Journal\/}~{\em 13\/}(4), 483--496.


\bibitem[\protect\citeauthoryear{Hall}{Hall}{1992}]{hall1992}
Hall, P. (1992).
\newblock {\em The bootstrap and edgeworth expansion}.
\newblock Springer.


\bibitem[\protect\citeauthoryear{Jiang, Hong, and Ren}{Jiang
  et~al.}{2017}]{Jiang_2017}
Jiang, W., H.~Hong, and J.~Ren (2017, dec).
\newblock On pareto-optimal reinsurance with constraints under distortion risk
  measures.
\newblock {\em European Actuarial Journal\/}~{\em 8\/}(1), 215--243.


\bibitem[\protect\citeauthoryear{Kaluszka}{Kaluszka}{2001}]{kaluszka2001}
Kaluszka, M. (2001).
\newblock Optimal reinsurance under mean-variance premium principles.
\newblock {\em Insurance: Mathematics and Economics\/}~{\em 28\/}(1), 61--67.


\bibitem[\protect\citeauthoryear{Tan, Weng, and Zhang}{Tan
  et~al.}{2011}]{tan2011optimality}
Tan, K.~S., C.~Weng, and Y.~Zhang (2011).
\newblock Optimality of general reinsurance contracts under cte risk measure.
\newblock {\em Insurance: Mathematics and Economics\/}~{\em 49\/}(2), 175--187.


\bibitem[\protect\citeauthoryear{Wang and B{\o}lviken}{Wang and
  B{\o}lviken}{2019}]{wang2019}
Wang, Y. and E.~B{\o}lviken (2019).
\newblock How much is optimal reinsurance degraded by error?
\newblock working paper.

\bibitem[\protect\citeauthoryear{Weng and Zhuang}{Weng and
  Zhuang}{2017}]{Weng2017}
Weng, C. and S.~C. Zhuang (2017).
\newblock Cdf formulation for solving an optimal reinsurance problem.
\newblock {\em Scandinavian Actuarial Journal\/}~{\em 2017\/}(5), 395--418.


\bibitem[\protect\citeauthoryear{Zhuang, Weng, Tan, and Assa}{Zhuang
  et~al.}{2016}]{zhuang2016marginal}
Zhuang, S.~C., C.~Weng, K.~S. Tan, and H.~Assa (2016).
\newblock Marginal indemnification function formulation for optimal
  reinsurance.
\newblock {\em Insurance: Mathematics and Economics\/}~{\em 67}, 65--76.


\end{thebibliography}
\appendix
\section{Proofs of asymptotics}\label{appendix}
\subsection{Proposition \ref{prop4.1}}
The following inequality is needed:
\\
\begin{lemma}\label{lemmaA1}
If $K(u)$  and $W(u)$  are as in Section \ref{sec:basics},
	and  $F(x)$ is a distribution function, then for all $a$, 
$b$ and $x$
\begin{equation}
|\int_{a}^{x} K\{F(t)\}dt+aK\{F(b)\}|\leq |x|\left(\int_0^1 |W(u)-1|du+
K\{F(b)\}\right).
\label{a1}
\end{equation}	
\end{lemma}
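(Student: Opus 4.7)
The plan is to handle this bound through an algebraic rearrangement that converts $aK\{F(b)\}$ into the target $xK\{F(b)\}$, modulo an integral of the centered quantity $K\{F(t)\}-K\{F(b)\}$, and then apply pointwise estimates coming directly from the definition of $K$.

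First I would rewrite $aK\{F(b)\} = xK\{F(b)\} - (x-a)K\{F(b)\} = xK\{F(b)\} - \int_a^x K\{F(b)\}\,dt$. Combining this with the original integral yields the identity
\begin{equation*}
\int_a^x K\{F(t)\}\,dt + aK\{F(b)\} = \int_a^x \bigl[K\{F(t)\} - K\{F(b)\}\bigr]\,dt + xK\{F(b)\}.
\end{equation*}
Because $K\{F(b)\}\geq 0$ by Lemma~\ref{lem2.1}, the triangle inequality immediately accounts for the $|x|K\{F(b)\}$ contribution on the right-hand side of~(\ref{a1}); the remaining task is to estimate the centered integral.

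For the centered integrand I would use the defining formula $K(u)=\int_u^1(W(v)-1)\,dv$ to write
\begin{equation*}
K\{F(t)\} - K\{F(b)\} = \int_{F(t)}^{F(b)} (W(v)-1)\,dv,
\end{equation*}
which yields the uniform bound $|K\{F(t)\} - K\{F(b)\}| \leq \int_0^1 |W(u)-1|\,du$, regardless of the relative order of $F(t)$ and $F(b)$. Plugging this into the centered integral and multiplying by the length of the interval gives a term involving $\int_0^1 |W-1|\,du$, matching the first summand on the right of~(\ref{a1}).

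The delicate step is replacing the naive interval length $|x-a|$ by $|x|$ in the final bound. I would expect to do this by splitting the range of integration at the origin and treating the portions $[0,x]$ and $[a,0]$ separately, using in the latter the pointwise estimate $|K\{F(t)\}|\leq K(0)\leq \int_0^1|W-1|\,du$ together with the sign structure that, after recombination with $aK\{F(b)\}$, collapses the $[a,0]$ contribution into a boundary piece absorbable into $|x|K\{F(b)\}$. Sorting out these cases according to the signs of $a$ and $x$ is the main obstacle, but in each case only the pointwise bound on $|K\{F(t)\}-K\{F(b)\}|$ and the non-negativity $K\geq 0$ from Lemma~\ref{lem2.1} are required.
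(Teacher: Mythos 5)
Your algebraic rearrangement $\int_a^x K\{F(t)\}\,dt + aK\{F(b)\} = \int_a^x\bigl[K\{F(t)\}-K\{F(b)\}\bigr]\,dt + xK\{F(b)\}$ and the representation $K\{F(t)\}-K\{F(b)\} = \int_{F(t)}^{F(b)}(W(v)-1)\,dv$ are exactly the paper's opening moves, but the two arguments then part company at the step you yourself flag as ``delicate,'' and that is where your proof has a real gap.

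Bounding the centered integrand pointwise and multiplying by the interval length only yields $|x-a|\int_0^1|W(u)-1|\,du$, which is strictly weaker than the asserted $|x|\int_0^1|W(u)-1|\,du$. The fix you sketch does not close that gap. First, the pointwise estimate $|K\{F(t)\}|\leq K(0)$ that you invoke is generally false: under Condition \ref{condi1} the function $K$ rises from $K(0)$ to its maximum $K(u_0)$ before decaying, so $K(0)$ is not an upper bound on $K$. Second, the assertion that after recombining with $aK\{F(b)\}$ the $[a,0]$ contribution ``collapses into a boundary piece absorbable into $|x|K\{F(b)\}$'' is never substantiated. What would actually make the $[a,0]$ piece harmless is a much more specific fact that you never state: since $F$ in Section \ref{sec:basics} is the distribution of a nonnegative variable, $F(t)=0$ for $t\leq 0$, so $K\{F(t)\}=K(0)$ \emph{identically} on $[a,0]$; when $F(b)=0$ this gives $\int_a^0 K\{F(t)\}\,dt = -aK(0) = -aK\{F(b)\}$ and the $[a,0]$ piece cancels $aK\{F(b)\}$ outright rather than being ``absorbed.'' A gesture at ``sign structure'' does not identify this cancellation, and the false $K(0)$-bound you cite points in a different and unhelpful direction.

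The paper takes a cleaner route that avoids the pointwise estimate altogether: it keeps the double integral $\int_a^x\int_{F(b)}^{F(t)}(W(u)-1)\,du\,dt$ intact and applies Fubini, converting it to $\int_{F(b)}^{F(x)}(W(u)-1)\{x-\max(F^{-1}(u),a)\}\,du$. After the swap, the $t$-extent for fixed $u$ is $x-\max(F^{-1}(u),a)$ rather than the naive $x-a$, and since $F^{-1}(u)\geq 0$ this extent is at most $|x|$, delivering the stated bound without any splitting or case analysis. Your pointwise bound discards exactly this geometric information, and the sign-splitting you propose does not recover it as written.
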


\begin{proof}
	Note that
	\begin{displaymath}
	\int_{a}^{x} K\{F(t)\}dt+aK\{F(b)\}=
	\int_{a}^x \{K\{F(t)\}-K\{F(b)\}\}dt+xK\{F(b)\}
	\end{displaymath}
	where the integral when inserting~(\ref{e210}) for $K(u)$ is
	\begin{displaymath}
	\int_{a}^{x}\int_{F(b)}^{F(t)}(W(u)-1)dudt
	=\int_{F(b)}^{F(x)}(W(u)-1)\int_{\max(F^{-1}(u),a)}^{x}dtdu
	\end{displaymath}
	\begin{displaymath}
	\hspace*{4.3cm}=\int_{F(b)}^{F(x)}(W(u)-1)\{x-\max(F^{-1}(u),a)\}du
	\end{displaymath}
	which means that the absolute value is bounded by $|x|\int_{0}^1|W(u)-1|du$,
	and~(\ref{a1}) follows.
\end{proof}
\begin{lemma}\label{lemmaA2}
	If $x_J^0=-\sqrt{J}\xi/\sigma$ and
$F_J^0(x)$ is the distribution function of the normalized risk variable
$X_J^0=(X-J\xi)/(\sqrt{J}\sigma)$, then
\begin{equation}
\frac{1}{x_J^0}\int_{x_J^0}^{x_{\epsilon J}^0} K\{F_J^0(t)\}dt+K(0)\rightarrow 0
\quad\mbox{as}\quad J\rightarrow \infty.
\label{a2}
\end{equation}
\end{lemma}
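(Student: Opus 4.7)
The main observation is that the domain of integration $[x_J^0,x_{\epsilon J}^0]$ stretches to $(-\infty,\phi_\epsilon]$ as $J\to\infty$, since $x_J^0=-\sqrt{J}\xi/\sigma\to-\infty$ while $x_{\epsilon J}^0\to\phi_\epsilon$ by the central limit theorem. On the bulk of this domain (the far left tail) the integrand $K\{F_J^0(t)\}$ should be nearly constant, equal to $K(0)$, because $F_J^0(t)$ is nearly $0$ there and $K$ is continuous. Dividing the resulting $(-x_J^0)\cdot K(0)$ contribution by the negative quantity $x_J^0$ then produces $-K(0)$, which cancels the $+K(0)$ in the statement. So the plan is to make this heuristic precise by a cut-off argument.

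First I would fix $\eta>0$ and use the continuity of $K$ at $0$ (which follows from $K(u)=\int_u^1(W(v)-1)dv$) to choose $\delta_0>0$ with $|K(u)-K(0)|<\eta$ for $0\le u\le\delta_0$. Next I would invoke P\'olya's theorem: since $F_J^0\to\Phi$ pointwise and $\Phi$ is continuous, the convergence is uniform on $\mathbb{R}$. Picking $M$ large enough that $\Phi(-M)<\delta_0/2$ and then $J_0$ large enough that $\sup_t|F_J^0(t)-\Phi(t)|<\delta_0/2$ for $J\ge J_0$, monotonicity of $F_J^0$ gives $F_J^0(t)\le F_J^0(-M)<\delta_0$ for every $t\le -M$ and every $J\ge J_0$, so $|K\{F_J^0(t)\}-K(0)|<\eta$ uniformly on that range.

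Then I would split
\[
\int_{x_J^0}^{x_{\epsilon J}^0}K\{F_J^0(t)\}\,dt=\int_{x_J^0}^{-M}K\{F_J^0(t)\}\,dt+\int_{-M}^{x_{\epsilon J}^0}K\{F_J^0(t)\}\,dt.
\]
The first piece equals $(-M-x_J^0)K(0)+r_J$ with $|r_J|\le\eta(-M-x_J^0)\le\eta|x_J^0|$. The second piece is bounded in absolute value by $(x_{\epsilon J}^0+M)\sup_{0\le u\le 1}K(u)$, which is a constant independent of $J$ (since $K$ is continuous on the compact interval $[0,1]$ and $x_{\epsilon J}^0\to\phi_\epsilon$). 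Dividing through by $x_J^0<0$, the first piece contributes $-K(0)-MK(0)/x_J^0+O(\eta)$, which tends to $-K(0)+O(\eta)$, while the second piece contributes $O(1/|x_J^0|)\to 0$. Adding $K(0)$ therefore gives an expression of order $O(\eta)$, and since $\eta$ is arbitrary the limit is $0$.

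The only subtle point is the interchange of the two limits $t\to-\infty$ and $J\to\infty$ implicit in treating $F_J^0(t)$ as uniformly small on the lower part of the integration range. P\'olya's theorem is what secures this, and it is available because the limit distribution $\Phi$ is continuous. Everything else reduces to the elementary observations that $K$ is bounded and continuous on $[0,1]$ and that $x_{\epsilon J}^0$ is bounded while $x_J^0\to-\infty$.
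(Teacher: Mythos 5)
Your proof is correct, and it takes a genuinely different route than the paper. The paper proves Lemma A.2 as an almost immediate corollary of its Lemma A.1: setting $a=b=x_J^0$ and $x=x_{\epsilon J}^0$ in the inequality (A.1) gives
\[
\Bigl|\int_{x_J^0}^{x_{\epsilon J}^0} K\{F_J^0(t)\}\,dt + x_J^0 K(0)\Bigr| \le |x_{\epsilon J}^0|\Bigl(\int_0^1|W(u)-1|\,du + K(0)\Bigr),
\]
and since the right side stays bounded while $|x_J^0|\to\infty$, dividing by $x_J^0$ finishes the argument in one line. Your proof instead avoids Lemma A.1 entirely and runs a direct truncation argument: split the integral at a fixed $-M$, use P\'olya's theorem to make $F_J^0$ uniformly small on $(-\infty,-M]$, and use continuity of $K$ at $0$ and boundedness of $K$ on $[0,1]$ to control the two pieces. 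Both arguments are valid. The paper's route is shorter here because Lemma A.1 is needed elsewhere anyway (notably in Lemma A.3 and the proof of Proposition 4.1), so the cost of proving it is amortized; your route is more self-contained and makes the underlying analytic mechanism --- that the far left tail dominates and there $K\{F_J^0(t)\}\approx K(0)$ --- visible rather than hiding it inside an algebraic identity. One thing you use implicitly, namely that $\int_0^1|W(u)-1|\,du<\infty$ so that $K$ is continuous and bounded on $[0,1]$, is the same finiteness hypothesis the paper relies on in Lemma A.1, so no extra assumption is being smuggled in.
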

\begin{proof}
	Insert 
	$a=x_J^0=-\sqrt{J}\xi/\sigma$,
	$x=x_{\epsilon J}^0$, $b=x_J^0$ in~(\ref{a1}) and
	replace $F(x)$ with $F_J^0(x)$. Then
	\begin{displaymath}
	|\int_{x_J^0}^{x_{\epsilon J}^0} 
	K\{F_J^0(t)\}dt+x_{J}^0K(0)|\leq |x_{\epsilon J}^0|\left(\int_0^1 |W(u)-1|du+
	K(0)\right),
	\end{displaymath}
	and since $x_{\epsilon J}^0$ tends to the Gaussian percentile as 
	$J\rightarrow \infty$
	and $|W(u)-1|$ has finite integral, this implies~(\ref{a2}).
\end{proof}
The next lemma shows that one-layer contracts 
are better than two-layer ones when
portfolios are large:
\\
\begin{lemma}\label{lemmaA3}
	There exits for any $\eta>0$  some $J_\eta$
	so that if Condition \ref{condi1} is true,
	\begin{equation}
	\sup\{{\cal C}_J(I_{{\bf a}_J})-{\cal C}_J(I_{{\bf a}_J{\bf b}_J})\}<\eta
	\quad\mbox{when}\quad J>J_\eta
	\label{a3}
	\end{equation}
    where the sup is over all sequences of coefficients ${\bf a}_J$  and
	${\bf b}_J$  so that
	$0=b_{1J}\leq b_{2J}\leq a_{1J}\leq a_{2J}=x_{\epsilon J}$.
\end{lemma}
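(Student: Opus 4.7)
The plan is to compute the difference
$\Delta_J := {\cal C}_J(I_{{\bf a}_J}) - {\cal C}_J(I_{{\bf a}_J{\bf b}_J})$
as a single rational expression, show that its numerator is asymptotically non-positive at leading order, and exploit the feasibility constraint $G_a,G_{ab}>0$ to keep the denominator of order $J^{2}$.

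Starting from (2.6) and (2.11), together with $\rho(I_{{\bf a}_J}) = a_{1J}$ and $\rho(I_{{\bf a}_J{\bf b}_J}) = a_{1J} - b_{2J}$, direct algebra gives
\begin{equation*}
\Delta_J = \frac{N_J}{G_a\, G_{ab}},\qquad N_J = b_{2J}(\gamma J\xi - C_J) - a_{1J} B_J,
\end{equation*}
with $B_J = \int_0^{b_{2J}} K\{F_J(t)\}\,dt$ and $C_J = \int_{a_{1J}}^{x_{\epsilon J}} K\{F_J(t)\}\,dt$. Introducing the full integral $D^{\mathrm{tot}}_J = \int_0^{x_{\epsilon J}} K\{F_J(t)\}\,dt = B_J + E_J + C_J$ with $E_J = \int_{b_{2J}}^{a_{1J}} K\{F_J(t)\}\,dt$ rearranges $N_J$ cleanly into
\begin{equation*}
N_J = b_{2J}(\gamma J\xi - D^{\mathrm{tot}}_J) + b_{2J}(a_{1J}-b_{2J})(\bar K_E - \bar K_B),
\end{equation*}
where $\bar K_B = B_J/b_{2J}$ and $\bar K_E = E_J/(a_{1J}-b_{2J})$ are the two averages of $K\{F_J\}$ over the b-layer and over the intermediate slab.

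Lemma \ref{lemmaA2} provides $D^{\mathrm{tot}}_J = J\xi K(0) + o(J)$, so by Condition \ref{condi1}(ii) the first term of $N_J$ is $\sim -J\xi b_{2J}(K(0)-\gamma)$, strictly negative of order $J^{2}$ whenever $b_{2J}=\Theta(J)$. For the second term I would use $F_J^{0}\to\Phi$ uniformly on compact sets, combined with Lemma \ref{lemmaA1} as a quantitative companion, to argue that $\bar K_B,\bar K_E\to K(0)$ whenever the normalized endpoints $b_{2J}^{0}$ and $a_{1J}^{0}$ tend to $-\infty$, which forces $\bar K_E-\bar K_B=o(1)$ and makes the second term $o(J^{2})$; in the complementary microscopic regime, where one or both normalized endpoints stay bounded, the geometric factor $b_{2J}(a_{1J}-b_{2J})$ is itself of smaller order and the same conclusion follows.

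The main obstacle is propagating this pointwise leading-order estimate to a uniform bound over all feasible sequences. The plan is to rescale by $\tau_a = a_{1J}/(J\xi)$ and $\tau_b = b_{2J}/(J\xi)$ and to verify that the two feasibility constraints $G_a>0$, $G_{ab}>0$ pin $(\tau_a,\tau_b)$ inside a compact subset of $\{0\le\tau_b\le\tau_a\le 1\}$ bounded away from the edge where $G_{ab}$ degenerates, on which $G_a G_{ab}\ge c J^{2}$ for some $c>0$ depending only on $\gamma$, $\beta$ and $K$. A contradiction argument then closes the proof: if some sequence $J_n\to\infty$ produced $\Delta_{J_n}>\eta$, extract a convergent subsequence of $(\tau_a^{(n)},\tau_b^{(n)})$ and apply the leading-order negativity established above to contradict the hypothesis.
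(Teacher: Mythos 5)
Your algebraic reduction of the difference is exact and correctly identifies the driving force: writing
\begin{equation*}
N_J = b_{2J}\bigl(\gamma J\xi - D^{\mathrm{tot}}_J\bigr) + b_{2J}(a_{1J}-b_{2J})\bigl(\bar K_E - \bar K_B\bigr),
\end{equation*}
the first term is negative of the right order by Condition \ref{condi1}(ii) together with Lemma \ref{lemmaA2}, exactly the mechanism the paper exploits. However, there are two genuine gaps in the way you control the correction term and close the argument.

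First, the case split for $b_{2J}(a_{1J}-b_{2J})(\bar K_E - \bar K_B)$ is not exhaustive as stated. Your case (a) treats both normalized endpoints tending to $-\infty$; your case (b) asserts that if one or both stay bounded then the geometric factor $b_{2J}(a_{1J}-b_{2J})$ is of smaller order. Case (b) fails in the mixed regime where $a_{1J}^0$ stays bounded while $b_{2J}^0\to-\infty$ at rate $\Theta(\sqrt J)$: then $b_{2J}=\Theta(J)$ and $a_{1J}-b_{2J}=\Theta(J)$ simultaneously, so the geometric factor is $\Theta(J^2)$, not smaller. What saves the conclusion in this regime is that $\bar K_E\to K(0)$ anyway, because the slab $[b_{2J},a_{1J}]$ sits almost entirely in the far left where $F_J$ is near zero — but making that uniform over all admissible sequences requires precisely the kind of two-region split (a compact window around the effective Gaussian range versus the far left tail, using that $F_J^0\to\Phi$ uniformly) that the paper carries out when it bounds $\sup\,|B_J(b_{2J}^0)-K(0)|$ via inequality (\ref{a8a}). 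You have not supplied it.

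Second, the closing compactness step is too loose. The admissible region $\{G_a>0,\ G_{ab}>0\}$ in the rescaled variables $(\tau_a,\tau_b)$ depends on $J$; it does not obviously stay inside a fixed compact set bounded away from the degenerate edge where $G_{ab}$ vanishes, so a bound $G_aG_{ab}\ge cJ^2$ uniform over all admissible $(a_{1J},b_{2J})$ is not available for free. Extracting a convergent subsequence of $(\tau_a^{(n)},\tau_b^{(n)})$ only contradicts the hypothesis if you already possess quantitative, $(\tau_a,\tau_b)$-uniform control of the lower-order terms, which is the very thing at issue.

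The paper's route sidesteps both problems by differentiating ${\cal C}_J$ in $b_{2J}^0$ and showing $\partial{\cal C}_J/\partial b_{2J}^0 = \sqrt J\sigma\, H_J/{\cal G}_J^{\,2}$ with $H_J/(J\xi)\to K(0)-\gamma>0$ uniformly: the sign conclusion (remove the $b$-layer) then follows at once, and the divisor ${\cal G}_J^{\,2}$ is harmlessly positive. The uniform convergence of $H_J/(J\xi)$ still needs the two-region argument for the $B_J$ term, but the monotonicity framing makes the bookkeeping far cleaner than managing a numerator-over-denominator difference. Your direct-difference route can be completed, but only after filling the two gaps above with a uniformity argument of the paper's type.
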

\begin{proof}
	Consider the reinsurance function
	$I_{{\bf a}_J{\bf b}_J}$ with coefficients as in the lemma
	for which 
	$I_{{\bf a}_J{\bf b}_J}(x_{\epsilon J})=x_{\epsilon J}-a_{1J}+b_{2J}$. 
	Hence Value at Risk becomes
	\begin{displaymath}
	x_{\epsilon J}-I_{{\bf a}_J{\bf b}_J}(x_{\epsilon J})=a_{1J}-b_{2J}
	=\sqrt{J}\sigma(a_{1J}^0-b_{2J}^0)
	\end{displaymath}
	after passing to the normalized coefficients. For the  
	expected surplus of $I_{{\bf a}_J{\bf b}_J}$
	we need
	the expected net reinsurance surplus which from~(\ref{e211}) is 
	\begin{displaymath}
	\pi(I_{{\bf a}_J{\bf b}_J})-E\{I_{{\bf a}_J{\bf b}_J}(X)\}
	=\int_{a_{1J}}^{x_{\epsilon J}} K\{F_J(x)\}dx
	+\int_{0}^{b_{2 J}} K\{F_J(x)\}dx
	\end{displaymath}
	or after substituting $t=(x-J\xi)/(\sqrt{J}\sigma)$ in the integral
	\begin{displaymath}
	\pi(I_{{\bf a}_J{\bf b}_J})-E\{I_{{\bf a}_J{\bf b}_J}(X)\}
	=\sqrt{J}\sigma\left(\int_{a_{1J}^0}^{x^0_{\epsilon J}} K\{F_J^0(t)\}dt
	+\int_{x_{J}^0}^{b^0_{2 J}} K\{F_J^0(t)\}dt\right)
	\end{displaymath}
	so that the expected surplus for the insurer becomes
	\begin{equation}
	{\cal G}_J(I_{{\bf a}_J{\bf b}_J})=J\xi\gamma -
	\sqrt{J}\sigma\left(\int_{a_{1J}^0}^{x^0_{\epsilon J}} K\{F_J^0(t)\}dt
	+\int_{x_{J}^0}^{b^0_{2 J}} K\{F_J^0(t)\}dt+
	\beta(a_{1J}^0-b_{2J}^0), 
	\right)
	\label{a4}
	\end{equation}
	and
	\begin{equation}
	{\cal C}_J(I_{{\bf a}_J{\bf b}_J})=\frac{\sqrt{J}\sigma(a_{1J}^0-b_{2J}^0)}
	{{\cal G}_J(I_{{\bf a}_J,{\bf b}_J})}.
	\label{a5}
	\end{equation}
	Elementary differentiation yields
	\begin{displaymath}
	\frac{\partial{\cal C}_J(I_{{\bf a}_J{\bf b}_J})}{\partial b_{2J}^0}
	=\frac{-\sqrt{J}\sigma}{{\cal G}_J(I_{{\bf a}_J{\bf b}_J})}-\frac{\sqrt{J}\sigma(a_{1J}^0-b_{2J}^0)}
	{{\cal G}_J(I_{{\bf a}_J{\bf b}_J})^2}
	\frac{\partial {\cal G}_J(I_{{\bf a}_J,{\bf b}_J})}{\partial b_{2J}^0}
	\end{displaymath}
	with
	\begin{displaymath}
	\frac{\partial {\cal G}_J(I_{{\bf a}_J{\bf b}_J})}{\partial b_{2J}^0}
	=-\sqrt{J}\sigma(K\{F_J^0(b_{2J}^0)\}-\beta).
	\end{displaymath}
	After some straightforward calculations 
	\begin{displaymath}
	\frac{\partial{\cal C}_J(I_{{\bf a}_J{\bf b}_J})}{\partial b_{2J}^0}=
\sqrt{J}\sigma
	\frac{H_J(a_{1J}^0,b_{2J}^0)}
	{{\cal G}_J(I_{{\bf a}_J,{\bf b}_J})^2}
	\end{displaymath}
	where
	\begin{displaymath}
	H_J(a_{1J}^0,b_{2J}^0)=-J\xi\gamma+\sqrt{J}\sigma\left(
	\int_{a_{1J}^0}^{x^0_{\epsilon J}} K\{F_J^0(t)\}dt
	+\int_{x_{J}^0}^{b^0_{2 J}} K\{F_J^0(t)\}dt+
	(a_{1J}^0-b_{2J}^0)K\{F_J^0(b_{2J}^0)\}\right).
	\end{displaymath}
	Whether ${\cal C}_J(I_{{\bf a}_J{\bf b}_J})$ goes up or down with
	$b_{2J}^0$
	is determined by this function
	which can be examined through
	\begin{equation}
	\frac{H_J(a_{1J}^0,b_{2J}^0)}{J\xi}
=-\gamma+A_J(a_{1J}^0,b_{2J}^0)+B_J(b_{2J}^0)
        \label{a5a}
	\end{equation}
	where since $x_J^0=-\sqrt{J}\xi/\sigma$
	\begin{equation}
	A_J(a_{1J}^0,b_{2J}^0)=-\frac{1}{x_J^0}
	\left(\int_{a_{1J}^0}^{x^0_{\epsilon J}} K\{F_J^0(t)\}dt
	+a_{1J}^0K\{F_J^0(b_{2J}^0)\}\right),
        \label{a5b}
	\end{equation}
	\begin{equation}
	B_J(b_{2J}^0)=-\frac{1}{x_J^0}
	\left(\int_{x_J^0}^{b^0_{2 J}} K\{F_J^0(t)\}dt
	-b_{2J}^0K\{F_J^0(b_{2J}^0)\}\right).
        \label{a5c}
	\end{equation}
	By Lemma \ref{lemmaA1} with $a=a_{1J}^0$, $x=x_{\epsilon J}^0$
        and $b=b_{1J}^0$
	\begin{displaymath}
	\sup|A_J(a_{1J}^0,b_{2J}^0)|\leq -\frac{x_{\epsilon J}^0}{x_J^0}\left(
	\int_0^1|W(u)-1|du+K\{F(b_{2J}^0)\}\right)
	\end{displaymath}
	where $\sup$ is over $b_{2J}^0\leq a_{1J}^0\leq x_{\epsilon J}^0$.
	Since $x_{\epsilon J}$, $K\{F(b_{2J}^0)\}$ and the integral
	on the 
	right are bounded, 
	$\sup|A_J(a_{1J}^0,b_{2J}^0)|\rightarrow 0$.
	To deal with the other quantity note that
	\begin{displaymath}
	B_J(b_{2J}^0)-K(0)
	=-\frac{1}{x_J^0}
	\left(\int_{x_J^0}^{b^0_{2 J}} K\{F_J^0(t)\}dt
	-b_{2J}^0K\{F_J^0(b_{2J}^0)\}+x_J^0K(0)\right)
	\end{displaymath}
        \begin{displaymath}
	\hspace*{2.7cm}=-\frac{1}{x_J^0}
	\left(\int_{x_J^0}^{b^0_{2 J}} [K\{F_J^0(t)\}-K(0)]dt
	+b_{2J}^0[K(0)-K\{F_J^0(b_{2J}^0)\}]\right).
	\end{displaymath}
        Moreover, from~(\ref{e210})
        \begin{displaymath}
        \int_{x_J^0}^{b^0_{2 J}} [K\{F_J^0(t)\}-K(0)]dt
         =-\int_{x_J^0}^{b^0_{2 J}} \int_0^{F_J^0(t)}(W(u)-1)dudt
	\end{displaymath}
        which becomes after changing the order of integration
        \begin{displaymath}
         =-\int_0^{F_J^0(b_{2J}^0)}\{b_{2J}^0-(F_J^0)^{-1}(u)\}(W(u)-1)du
	\end{displaymath}
        \begin{displaymath}
         =-b_{2J}^0[K(0)-K\{F_J^0(b_{2J}^0)\}]
        +\int_0^{F_J^0(b_{2J}^0)}(F_J^0)^{-1}(u)(W(u)-1)du
	\end{displaymath}
        so that
	\begin{displaymath}
	B_J(b_{2J}^0)-K(0)
	=-\frac{1}{x_J^0}
        \int_0^{F_J^0(b_{2J}^0)}(F_J^0)^{-1}(u)(W(u)-1)du.
	\end{displaymath}
        But it follows from this that
	\begin{equation}
	|B_J(b_{2J}^0)-K(0)|\leq
	\frac{|b_{2J}^0|}{|x_J^0|} F_J^0(b_{2J}^0) s_W
   	\label{a8a} 
	\end{equation}
        where
	\begin{displaymath}
   	s_W=\sup_{0<u<1-\epsilon}|W(u)-1|<\infty. 
	\end{displaymath}
        We have to show that the right hand side of~(\ref{a8a})
$\rightarrow 0$ 
as $J\rightarrow \infty$ uniformly in $b_{2J}^0$ 
        when $-\sqrt{J}\xi/\sigma=x_J^0\leq b_{2J}^0\leq x_{\epsilon J}^0$.
	Let $\eta>0$ and note that $F_J^0(x_{1-\eta,J}^0)=\eta$
and recall that $x_{1-\eta,J}\rightarrow \phi_{1-\eta}$ when $J\rightarrow \infty$.
        It follows that there exists a $J_\eta$ so that if $J>J_\eta$, then
        $|b_{2J}^0|\leq |x_{J}^0|$ when
        $x_J^0\leq b_{2J}^0\leq x_{1-\eta,J}^0$ 
and under this condition
	\begin{displaymath}
	|B_J(b_{2J}^0)-K(0)|
	\leq \eta s_W.
	\end{displaymath}
	In the opposite case when
$x_{1-\eta,J}^0< b_{2J}^0\leq x_{\epsilon J}^0$ the interval is bounded as 
$J\rightarrow \infty$, and $x_J^0$ in the denominator in~(\ref{a8a})
implies that $|B_J(b_{2J}^0)-K(0)|\rightarrow 0$ uniformly in this interval too	
	so that
	$\sup|B_J(b_{2J}^0)-K(0)| \rightarrow 0$ where the sup is over
        $x_J^0\leq b_{2J}^0\leq x_{\epsilon J}^0$.
	Finally from~(\ref{a5a})
	\begin{displaymath}
	\sup|H_J(a_{1J}^0,b_{2J}^0)/(J\xi)-(K(0)-\gamma)|\rightarrow 0
	\end{displaymath}
	where the sup is over all
	$a_{1J}^0$ and $b_{2J}^0$ 
	for which $x_J^0\leq b_{2J}^0\leq a_{1J}^0\leq x_{\epsilon J}$. 
	But since $K(0)>\gamma$, 
        this uniform bound 
	establishes 	for sufficiently large $J$
	that $H_J(a_{1J}^0,b_{2J}^0)>0$ for all $a_{1J}^0$ and $b_{2J}^0$
	which in turn implies that 
	${\cal C}_J(I_{{\bf a}_J{\bf b}_J})$ for such $J$ 
	is an increasing function of $b_{2J}^0$ for all $a_{1J}^0$  so that 
	the optimum is to remove the $b$-layer completely.
\end{proof}
We need still another lemma which utilizes that $F_J^0(x)\rightarrow \Phi(x)$
uniformly in $x$ as $J\rightarrow \infty$ where $\Phi(x)$ is the standard 
Gaussian distribution 
function; consult \cite{hall1992} (for example) for this result.
\\
\begin{lemma}\label{lemmaA4}
	Let $x_{\delta J}^0$ be the 
$1-\delta$ percentile
of $X_J^0$  which satisfies
$K\{F_J^0(x_{\delta J}^0)\}=\gamma$  and let $z_J^0$ 
 be a sequence so that   $K\{F_J^0(z_J^0)\}\rightarrow\gamma$
as $J\rightarrow \infty$. Then $z_J^0-x_{\delta J}^0\rightarrow 0$.
\end{lemma}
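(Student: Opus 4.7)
The plan is to combine two ingredients: uniform convergence of the centered-normalized distribution functions, $\sup_x |F_J^0(x) - \Phi(x)| \to 0$ (a consequence of Pólya's theorem once pointwise convergence to the continuous limit $\Phi$ is in hand), and local invertibility of $K$ at $u = 1-\delta$. The target is to show that both $x_{\delta J}^0$ and $z_J^0$ converge to the standard normal $(1-\delta)$-percentile $\phi_\delta$, from which the conclusion follows by subtraction.

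First I would dispatch $x_{\delta J}^0 \to \phi_\delta$. By definition $F_J^0(x_{\delta J}^0) = 1-\delta = \Phi(\phi_\delta)$, so uniform closeness of $F_J^0$ and $\Phi$ gives $\Phi(x_{\delta J}^0) \to 1-\delta$, and continuity of $\Phi^{-1}$ delivers $x_{\delta J}^0 \to \phi_\delta$.

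Next, for $z_J^0$ the strategy is in two stages: first show $F_J^0(z_J^0) \to 1-\delta$, then transfer this to $z_J^0 \to \phi_\delta$ exactly as above. By Lemma \ref{lem2.1} and the paper's standing assumption that $K'(1-\delta) < 0$ is strict, continuity of $K'$ provides a neighborhood $[1-\delta-\rho, 1-\delta+\rho]$ on which $K$ is strictly decreasing and hence has a continuous inverse on its image. I would then argue along subsequences: given any subsequence $J_k$, boundedness $F_{J_k}^0(z_{J_k}^0) \in [0,1]$ allows extraction of a further convergent subsequence with limit $v^\ast \in [0,1]$; continuity of $K$ together with the hypothesis $K\{F_J^0(z_J^0)\} \to \gamma$ forces $K(v^\ast) = \gamma$. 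If $v^\ast = 1-\delta$, we are done on that subsequence; since the subsequence was arbitrary, $F_J^0(z_J^0) \to 1-\delta$ overall.

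The main obstacle is precisely pinning down $v^\ast = 1-\delta$ when, as in Figure \ref{fig:2.1}, the graph of $K$ also takes the value $\gamma$ at a second point $u^{\ast\ast} < u_0$ on the increasing branch. The lemma as stated cannot by itself exclude convergence to $u^{\ast\ast}$ without extra information on $z_J^0$. The remedy is to exploit the context in which the lemma will be invoked: in the applications within Appendix \ref{appendix} the sequence $z_J^0$ arises as a (near-)optimizer whose first-order conditions place it on the decreasing branch of $K \circ F_J^0$, so $F_J^0(z_J^0) \geq u_0$ eventually, ruling out $u^{\ast\ast}$. Once $F_J^0(z_J^0) \to 1-\delta$ is secured, uniform convergence gives $\Phi(z_J^0) \to 1-\delta$, continuity of $\Phi^{-1}$ yields $z_J^0 \to \phi_\delta$, and combining with the first stage produces $z_J^0 - x_{\delta J}^0 \to 0$ as claimed.
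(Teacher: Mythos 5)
Your overall strategy is the same as the paper's: establish uniform convergence $F_J^0 \to \Phi$, show $F_J^0(z_J^0) \to 1-\delta = F_J^0(x_{\delta J}^0)$ via the uniqueness of the solution to $K(u)=\gamma$, and then transfer to the arguments using continuity of $\Phi^{-1}$ at the interior point $1-\delta$. The subsequence extraction is a perfectly good way to make the step $K\{F_J^0(z_J^0)\}\to\gamma \Rightarrow F_J^0(z_J^0)\to 1-\delta$ precise.

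However, the ``main obstacle'' you identify does not exist, and the escape route you propose is not a valid fix. You worry that $K$ might take the value $\gamma$ at a second point $u^{**}<u_0$ on the increasing branch, and conclude that the lemma ``as stated cannot by itself exclude convergence to $u^{**}$.'' But Condition \ref{condi1}(ii) is exactly what rules this out. It gives $K(0)=E\{M(Z)\}-1>\gamma$ (this is equation~(\ref{e211a})). Since $K'(u)=1-W(u)$ and $W$ is non-decreasing, $K$ is non-decreasing on $[0,u_0]$, so $K(u)\geq K(0)>\gamma$ for every $u$ on the increasing branch — the $\gamma$-line is never touched there. The only crossing is on the decreasing branch, at $u=1-\delta$. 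This is precisely what Lemma \ref{lem2.1} asserts: the $\delta$ with $K(1-\delta)=\gamma$ is unique, and $K'(1-\delta)\leq 0$ is a derived property of that unique point, not a selection rule for choosing between several candidate solutions. Your subsequential limit $v^*$ therefore equals $1-\delta$ with no further input, and the lemma is self-contained as stated.

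The appeal to ``the context in which the lemma will be invoked'' — that $z_J^0$ is a near-optimizer whose first-order conditions place it on the decreasing branch — is not available inside the proof of this lemma and would be a genuine gap if your non-uniqueness worry were real. Since it is not real, you should simply delete that paragraph and instead invoke $K(0)>\gamma$ together with the monotone-then-monotone structure of $K$ to pin down $v^*=1-\delta$ directly. With that repair, your argument coincides with the paper's.
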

\begin{proof}
        Under Condition 1 there is a unique $\delta$ between $0$ and $1$  
so that $K(1-\delta)=\gamma$ which implies that
$F_J^0(z_J^0)\rightarrow 1-\delta=F_J^0(x_{\delta J}^0)$ so that
$F_J^0(z_J^0)-F_J^0(x_{\delta J}^0)\rightarrow 0$.
But since $F_J^0(x)\rightarrow \Phi(x)$ uniformly this cannot occur unless
$z_J^0-x_{\delta J}^0\rightarrow 0$.
\end{proof}
{\bf Finalizing the argument}
Proposition \ref{prop2} established that the
search for the optimal
reinsurance function can be carried out within  the two-layer class
$I_{{\bf a}_J{\bf b}_J}$ with $b_{1J}=0$
and $a_{2J}=x_{\epsilon J}$, and for large portfolios Lemma \ref{lemmaA3} 
further reduced the candidates  
to the one-layer sub-class $I_{{\bf a}_J}$ with $a_{2J}=x_{\epsilon J}$.
The coefficient $a_{1J}$ with its normalized version $a_{1J}^0$ 
is then the only remaining unknown to optimize over,
and 
it is convenient to simplify notation so that 
${\cal C}_J(a_{1J}^0)={\cal C}_J(I_{{\bf a}_J})$
and ${\cal G}_J(a_{1J}^0)={\cal G}_J(I_{{\bf a}_J})$ (the same convention is used
everywhere below).
\\\\
Value at Risk is now simply $a_{1J}=J\xi+\sqrt{J}\sigma a_{1J}^0$
so that the risk over surplus ratio becomes
\begin{displaymath}
{\cal C}_J(a_{1J}^0)=\frac{J\xi+\sqrt{J}\sigma a_{1J}^0}
{ {\cal G}_J(a_{1J}^0)}
\end{displaymath}
where after removing the $\bf b$-layer in~(\ref{a4})
\begin{equation}
{\cal G}_J(a_{1J}^0)=J\xi\gamma -
\sqrt{J}\sigma\int_{a_{1J}^0}^{x^0_{\epsilon J}} K\{F_J^0(t)\}dt
-\beta(J\xi+\sqrt{J}\sigma a_{1J}^0).
\label{a6}
\end{equation}
Differentiation yields
\begin{displaymath}
\frac{\partial {\cal C}_J(a_{1J}^0)}{\partial a_{1J}^0}
=\frac{\sqrt{J}\sigma}{ G_J(a_{1J}^0)}-
\frac{J\xi+\sqrt{J}\sigma a_{1J}^0}{G_J(a_{1J}^0)^2}
\sqrt{J}\sigma (K\{F_J^0(a_{1J}^0)\}-\beta),
\end{displaymath}
or after some straightforward calculations
\begin{displaymath}
\frac{\partial {\cal C}_J(a_{1J}^0)}{\partial a_{1J}^0}
=\sqrt{J}\sigma\frac{J\xi[\gamma-K\{F_J^0(a_{1J}^0)\}]
	-\sqrt{J}\sigma H_J(a_{1J}^0)}
{ {\cal G}_J(a_{1J}^0)^2}.
\end{displaymath}
where
\begin{equation}
H_J(a_{1J}^0)=\int_{a_{1J}^0}^{x_{\epsilon J}^0}
K\{F_J^0(t)\}dt+a_{1J}^0 K\{F_J^0(a_{1J}^0)\}.
\label{a7}
\end{equation}
Hence
\begin{equation}
\frac{\partial {\cal C}_J(a_{1J}^0)}{\partial a_{1J}^0}=
J^{3/2}\sigma\frac{\xi(\gamma-K\{F_J^0(a_{1J}^0)\})-\sigma H_J(a_{1J}^0)/\sqrt{J}}
{{\cal G}_J(a_{1J}^0)^2}.
\label{a8}
\end{equation}
By Lemma \ref{lemmaA1}, with $x=x_{\epsilon J}^0, a=a_{1J}^0=b$,
\begin{displaymath}
|H_J(a_{1J}^0)|\leq |x_{\epsilon J}^0|\Big(
\int_0^1|W(u)-1|du +K\{F_J^0(a_{1J}^0)\}\Big)
\end{displaymath}
and $\sup|H_J(a_{1J}^0)|/\sqrt{J}\rightarrow 0$
with the sup taken over all
$a_{1J}^0\leq x_{\epsilon J}^0$.
\\\\
Let $\tilde{a}_{1J}^0$ be the 
value minimizing
${\cal C}_J(a_{1J}^0)$ which from~(\ref{a8})
must satisfy that
\begin{displaymath}
K\{F_J^0(\tilde{a}_{1J}^0)\}]
+(\sigma/\xi) H_J(\tilde{a}_{1J}^0)/\sqrt{J}=\gamma,
\end{displaymath}
and since  $H_J(a_{1J}^0)/\sqrt{J}\rightarrow 0$ uniformly in $\tilde{a}_{1J}^0$,
it follows that $K\{F_J^0(\tilde{a}_{1J}^0)\}\rightarrow \gamma$.
But $K\{F_J^0(x_{\delta J}^0)\}=\gamma$, and by Lemma \ref{lemmaA4}
this can not occur unless
$\tilde{a}_{1J}^0-x_{\delta J}^0\rightarrow 0$. Hence $\hat{a}_{1J}^0 = \hat{x}_{\delta J}^0$ so that 
${\cal C}_J(\tilde{a}_{1J}^0))-{\cal C}_J(\hat{a}_{1J}^0) \rightarrow 0$
as well, and there exists for any $\eta >0$ some $J_\eta$
so that for any $a_{1J}^0$
\begin{displaymath}
{\cal C}_J(a_{1J}^0)\geq {\cal C}_J(\tilde{a}_{1J}^0)\geq
{\cal C}_J(\hat{a}_{1J}^0)-\eta
\end{displaymath}
which completes the proof of Proposition \ref{prop4.1}. 

\subsection{ Proposition \ref{prop4.2}}\label{appenA2}
{\em Part 1} We need asymptotic expressions 
for the first and second derivative of
${\cal C}_J(a_{1J}^0)$ at $a_{1J}^0=x_{\delta J}^0$. In~(\ref{a8}) the 
first term in the numerator then vanishes 
since $K\{F_J^0(x_{\delta J}^0)\}=\gamma$
so that
\begin{equation}
\frac{\partial {\cal C}_J(x_{\delta J}^0)}{\partial a_{1J}^0}=
-J\frac{\sigma^2 H_J(x_{\delta J}^0)}
{{\cal G}_J(x_{\delta J}^0)^2},
\label{a11a}
\end{equation}
whereas from~(\ref{a6}) 
\begin{displaymath}
{\cal G}_J(x_{\delta J}^0)=J\xi(\gamma -\beta)+o(J),
\end{displaymath}
and from~(\ref{a7}) since $F_J^0(x_{\delta J})=1-\delta$
\begin{displaymath}
H_J(x_{\delta J}^0)=\int_{x_{\delta J}^0}^{x^0_{\epsilon J}}
K\{F_J^0(t)\}dt+x_{\delta J}K(1-\delta).
\end{displaymath}
But $F_J^0(x)\rightarrow \Phi(x)$, $x^0_{\delta J}\rightarrow \phi_\delta$ 
as $J\rightarrow \infty$ and
$x^0_{\epsilon J}\rightarrow \phi_\epsilon$ so that
\begin{displaymath}
H_J(x_{\delta J}^0)=\int_{\phi_\delta}^{\phi_{\epsilon}}
K\{\Phi(t)\}dt+\phi_{\delta }K(1-\delta) + o(1),
\end{displaymath}
and when the expressions for ${\cal G}_J(x_{\delta J}^0)$
and $H_J(x_{\delta J}^0)$ are inserted in~(\ref{a11a}), it emerges that
\begin{equation}
\frac{\partial {\cal C}_J(x_{\delta J}^0)}{\partial a_{1J}^0}=-
\frac{B_1}{J\xi^2(\gamma-\beta)^2}+o(1/J)
\quad\mbox{with}\quad B_1=\sigma^2\left(
\int_{\phi_\delta}^{\phi_{\epsilon}}
K\{\Phi(t)\}dt+\phi_{\delta }K(1-\delta)
\right)
\label{a9}
\end{equation}
The
second derivative can be calculated 
from~(\ref{a8}) and has a complicated expression. However, only the leading 
term is required, and this boils down to 
\begin{displaymath}
\frac{\partial^2C_J(x_{\delta J}^0)}
{\partial (a_{1J}^0)^2}
=-\frac{K'\{F_{J}^0(x_{\delta J})\}f_{J}^0(x_{\delta J})}
{\xi(\gamma-\beta)^2}\frac{1}{\sqrt{J}}+o(1/\sqrt{J})
\end{displaymath}
where $f_{J}^0(x)=d F_J^0(x)/dx$.
But the central limit theorem on density form 
yields $f_J^0(x)\rightarrow \Phi'(x)$ and as above
\begin{equation}
\frac{\partial^2C_J(x_{\delta J}^0)}
{\partial (a_{1J}^0)^2}
=-\frac{K'(1-\delta)\Phi'(\phi_\delta)}
{\xi(\gamma-\beta)^2}\frac{1}{\sqrt{J}}+o(1/\sqrt{J}).
\label{a10}
\end{equation}
{\em Part 2} We seek the asymptotic degradation when using
the normalizing coefficients 
$a_{1J}^0=x_{\delta J}$ instead of the optimal 
$\tilde{a}_{1J}^0$ with  in both cases
$a_{2J}^0=x_{\epsilon J}^0$ as upper limit.
An elementary one-variable Taylor argument  yields
\begin{displaymath}
{\cal C}_J(x_{\delta J}^0)={\cal C}_J(\tilde{a}_{1J}^0)
+\frac{1}{2}\frac{\partial^2{\cal C}_J(v_{1J})}
{\partial (a_{1J}^0)^2}
(x_{\delta J}^0-\tilde{a}_{1J}^0)^2 
\end{displaymath}
with $v_{1J}$ between 
$\tilde{a}_{1J}^0$ and $x_{\delta J}^0$. Note that
the linear term has vanished since $\tilde{a}_{1J}^0$ is minimizing. 
The degradation 
${\cal D}_J(x_{\delta J}^0)={\cal C}_J(x_{\delta J}^0)-{\cal C}_J(\tilde{a}_{1J}^0)$
in using
$a_{1J}^0=x_{\delta J}^0$ instead of $a_{1J}^0=\tilde{a}_{1J}^0$
is then
\begin{equation}
{\cal D}_J(x_{\delta J}^0)=
\frac{1}{2}\frac{\partial^2{\cal C}_J(v_{1J})}
{\partial (a_{1J}^0)^2}
(x_{\delta J}^0-\tilde{a}_{1J}^0)^2
\label{a11}
\end{equation}
where an assessment of $x_{\delta J}^0-\tilde{a}_{1J}^0$
is needed. The mean value theorem implies that
\begin{equation}
\frac{\partial {\cal C}_J(x_{\delta J}^0)}{\partial a_{1J}^0}=
\frac{\partial C_J(\tilde{a}_{1J}^0)}{\partial a_{1J}^0}
+\frac{\partial^2 C_J(v_{2J})}
{\partial (a_{1J}^0)^2}
(x_{\delta J}^0-\tilde{a}_{1J}^0) 
=\frac{\partial^2 C_J(v_{2J})}
{\partial (a_{1J}^0)^2}
(x_{\delta J}^0-\tilde{a}_{1J}^0)
\label{a12} 
\end{equation}
with $v_{2J}$ between 
$\tilde{a}_{1J}^0$ and $x_{\delta J}^0$. But the second order derivatives in~(\ref{a11}) and~(\ref{a12})
both tends to $\partial^2{\cal C}_J(x_{\delta J}^0)/
\partial (a_{1J}^0)^2$
since $x_{\delta J}^0-\tilde{a}_{1J}^0\rightarrow 0$ and 
both $v_{1J}$ and $v_{2J}$
are squeezed in between. By combining~(\ref{a11}) and~(\ref{a12})
it follows that
\begin{equation}
{\cal D}_J(x_{\delta J}^0)=
\frac{1}{2}\left(\frac{\partial {\cal C}_J(x_{\delta J}^0)}
{\partial a_{1J}^0}\right)^2
\left(\frac{\partial^2{\cal C}_J(x_{\delta J})}
{\partial (a_{1J}^0)^2}\right)^{-1}+o(1/J^{3/2})
\label{a13}
\end{equation}
where the error term is a consequence of those in~(\ref{a9}) 
and~(\ref{a10}), and inserting those leads after a straightforward 
calculation to 
\begin{displaymath}
{\cal D}_J(x_{\delta J}^0)=-\frac{1}{J^{3/2}}
\frac{B_1^2/\{\xi^3(\gamma-\beta)^2\}}{K'(1-\delta)\Phi'(\phi_\delta)}
+o(1/J^{3/2})
\end{displaymath}
as claimed in Proposition \ref{prop4.2}.
\subsection{Proposition \ref{prop4.3}}\label{appenA3}
The normalized coefficients 
when the reinsurance function are using the Gaussian percentiles
in~(\ref{e418}) are
${a}_{1J}^0=\phi_\delta$ and
${a}_{2J}^0=\phi_\epsilon$, and the upper
limit deviates from the exact one $x^0_{\epsilon J}$ which  changes  
things considerably. Value at Risk is now
\begin{displaymath}
a_{1J}+(x_{\epsilon J}-a_{2J})_+
=J\xi+\sqrt{J}\sigma(
\phi_\delta+(x_{\epsilon J}^0-\phi_\epsilon)_+).
\end{displaymath}
after inserting $a_{1J}=J\xi+\sqrt{J}\sigma\phi_\delta$,
$a_{2J}=J\xi+\sqrt{J}\sigma\phi_\epsilon$ and
$x_{\epsilon J}=J\xi+\sqrt{J}\sigma x_{\epsilon J}^0$. Recall that we are assuming 
$x^0_{\epsilon J}>\phi_\epsilon$, and the 
degradation in using ${a}_{1J}^0=\phi_\delta$ and
${a}_{2J}^0=\phi_\epsilon$ instead of the optimal
${a}_{1J}^0=\tilde{a}_{1J}^0$ and
${a}_{2J}^0=x_{\epsilon J}^0$ 
then becomes
\begin{displaymath}
{\cal D}_J(\phi_\delta,\phi_\epsilon)=
\frac{J\xi+\sqrt{J}\sigma(
	\phi_\delta+x_{\epsilon J}^0-\phi_\epsilon)}
{{\cal G}_J(\phi_\delta,\phi_\epsilon)}
-\frac{J\xi+\sqrt{J}\sigma\tilde{a}_{1J}^0}
{{\cal G}_J(\tilde{a}_{1J}^0,x_{\epsilon J}^0)}
\end{displaymath}
with
${\cal G}_J(\phi_\delta,\phi_\epsilon)$ and ${\cal G}_J(\tilde{a}_{1J}^0,x_{\epsilon J}^0)$
the expected surplus terms. 
This may be rewritten 
\begin{equation}
{\cal D}_J(\phi_\delta,\phi_\epsilon)=A_{1J}+A_{2J}+A_{3J}
\label{a14}
\end{equation}
where
\begin{equation}
A_{1J}=\sqrt{J}\sigma\frac{x_{\epsilon J}^0-\phi_\epsilon}
{{\cal G}_J(\phi_\delta,\phi_\epsilon)},
\label{a15}
\end{equation}
\begin{equation}
A_{2J}=\frac{J\xi+\sqrt{J}\sigma\phi_\delta}
{{\cal G}_J(\phi_\delta,x_{\epsilon J}^0)}
-\frac{J\xi+\sqrt{J}\sigma\tilde{a}_{1J}^0}
{{\cal G}_J(\tilde{a}_{1J}^0,x_{\epsilon J}^0)}.
\label{a16}
\end{equation}
\begin{equation}
A_{3J}=\frac{J\xi+\sqrt{J}\sigma\phi_\delta}
{{\cal G}_J(\phi_\delta,\phi_\epsilon)}
-\frac{J\xi+\sqrt{J}\sigma\phi_\delta}
{{\cal G}_J(\phi_\delta,x_{\epsilon J}^0)}.
\label{a17}
\end{equation}
The second of these terms represents degradation 
due to the difference between $\tilde{a}_{1J}^0$ and 
$\phi_\delta$ with the upper limit the optimal $x_{\epsilon J}^0$ 
and is on the argument that lead to
Proposition \ref{prop4.2} of order $o(1/J)$ whereas 
$A_{3J}$ must be examined further.
Note that
\begin{equation}
A_{3J}=\frac{J\xi+\sqrt{J}\sigma\phi_\delta}
{{\cal G}_J(\phi_\delta,\phi_\epsilon)
{\cal G}_J(\phi_\delta,x_{\epsilon J}^0)}
\{{\cal G}_J(\phi_\delta,x_{\epsilon J}^0)
-{\cal G}_J(\phi_\delta,\phi_{\epsilon})
\}.
\label{a18}
\end{equation}
where
\begin{displaymath}
{\cal G}_J(\phi_\delta, \phi_{\epsilon })=
J\xi\gamma -
\sqrt{J}\sigma\int_{\phi_\delta}^{\phi_{\epsilon J}} K\{F_J^0(t)\}dt
-\beta\{J\xi+\sqrt{J}\sigma(\phi_\delta+x_{\epsilon J}^0-\phi_\epsilon) \},
\end{displaymath}
\begin{displaymath}
{\cal G}_J(\phi_\delta, x_{\epsilon J}^0)=
J\xi\gamma -
\sqrt{J}\sigma\int_{\phi_\delta}^{x^0_{\epsilon J}} K\{F_J^0(t)\}dt
-\beta\{J\xi+\sqrt{J}\sigma\phi_\delta\}.
\end{displaymath}
Hence
\begin{displaymath}
{\cal G}_J(\phi_\delta,x_{\epsilon J}^0)
-{\cal G}_J(\phi_\delta,\phi_{\epsilon})
=\sqrt{J}\sigma\left(-\int_{\phi_\delta}^{x_{\epsilon J}^0} K\{F_J^0(t)\}dt
+\int_{\phi_\delta}^{\phi_{\epsilon }} K\{F_J^0(t)\}dt
+\beta(x_{\epsilon J}^0-\phi_\epsilon)\right)
\end{displaymath}
\begin{displaymath}
\hspace*{4.10cm}=
\sqrt{J}\sigma(-K\{F_J^0(\phi_{\epsilon })\}+\beta)(x_{\epsilon J}^0-\phi_\epsilon)
+o(1)
\end{displaymath}
after Taylor's formula has been applied to the difference between the integrals.
But $F_J^0(\phi_\epsilon)\rightarrow 1-\epsilon$
as $J\rightarrow \infty$ so that~(\ref{a18}) after some 
straightforward calculations becomes
\begin{displaymath}
A_{3J}=\frac{1}{\sqrt{J}}\frac{\sigma(-K(1-\epsilon)+\beta)}{\xi(\gamma-\beta)^2}
(x_{\epsilon J}^0-\phi_\epsilon)+o(1/J).
\end{displaymath}
This is a quantity of the same order of magnitude as $A_{1J}$ 
which by~(\ref{a15}) can be rewritten
\begin{displaymath}
A_{1J}=\frac{1}{\sqrt{J}}\frac{\sigma}{\xi(\gamma-\beta)}
(x_{\epsilon J}^0-\phi_\epsilon)
+o(1/J).
\end{displaymath}
These two must be added whereas $A_{2J}$ is of smaller order and can be ignored
so that~(\ref{a14}) is
${\cal D}_J(\phi_\delta,\phi_\epsilon)=A_{1J}+A_{3J}+o(1/J)$ or after a little calculation
\begin{displaymath}
{\cal D}_J(\phi_\delta,\phi_\epsilon)=\frac{1}{\sqrt{J}}\zeta_2
(x_{\epsilon J}^0-\phi_\epsilon)+o(1/J)
\quad\mbox{where}\quad
\zeta_2=\frac{\sigma\{\gamma-K(1-\epsilon)\}}{\xi(\gamma-\beta)^2}
\end{displaymath}
as claimed in Proposition \ref{prop4.3}.

\section{The simulation experiment in 
Section \ref{sec:opti}}\label{appenB}
Section \ref{subsec:num1} required the calculations of $K\{F(x)\}$
which is the main part of $\psi_\lambda(x)$ in~(\ref{e213}).
This requires a joint model for the uniform pair $(U,V)$ underlying $(X,Z)$.
We have used the Clayton Copula 
\begin{displaymath}
C(u,v)=(u^{-\theta}+v^{-\theta}-1)^{-1/\theta},
\hspace*{1cm}0<u,v<1
\end{displaymath}
with $\theta=10$, and $U$ and $V$ are then passed on
through $X=F^{-1}(U)$ and $Z=G^{-1}(V)$ where $F(x)$ and $G(z)$ 
are the distribution functions
of $X$ and $Z$.
From the definition of $K(u)$ in~(\ref{e210}) 
and $W(u)$ in~(\ref{e29}) right it follows that
\begin{displaymath}
K\{F(x)\}=E(\{M(Z)-1\}{\cal I}_{U>F(x)})
\end{displaymath}
which can be approximated by Monte Carlo through the following steps:
\\\\
\hspace*{1cm}1. Generate simulations $X_1^*,\dots,X^*_m$ of $X$.\\
\hspace*{1cm}2. Approximate $F(x)$ through the kernel density estimate
\begin{displaymath}
\hspace*{1cm}F^*(x)=\frac{1}{m}\sum_{i=1}^m\Phi\{(x-X^*_i)/(s^*h)\}
\end{displaymath}
\hspace*{1.5cm}with $\Phi(x)$ the 
Gaussian integral, $s^*$ the standard deviation of
$X_1^*,\dots,X^*_m$ and $h=0.2$.\\
\hspace*{1cm}3. Calculate $U_i^*=F^*(X_i^*)$, $i=1,\dots,m$.\\
\hspace*{1cm}4. Generate 
$Y^*_i\sim \mbox{ uniform}$, $i=1,\dots,m$.\\
\hspace*{1cm}5. Calculate 
$V_i^*=\{1+(U_i^*)^{-\theta}(Y_i^{-\theta/(1+\theta)})-1\}^{-1/\theta}$, $i=1,\dots,m$.\\
\hspace*{1cm}6. Calculate $Z^*_i=G^{-1}(V^*_i)$, $i=1,\dots,m$.
\\\\
The approximations of $K\{F(x)\}$ then becomes
\begin{displaymath}
\hspace*{1cm}K^*\{F^*(x)\}=\frac{1}{m}\sum_{i=1}^m M(Z_i^*){\cal I}_{U^*_i>F^*(x)}.
\end{displaymath}

Note that all Monte Carlo simulations have been $^*$-marked.
The first step is carried out by an ordinary program for simulating 
portfolio losses whereas steps $4$ and $5$ 
is one of the ways the Clayton copula can be simulated;
consult p.208 in \cite{bolviken_2014}. The final step $6$ makes use of the 
percentile function $G^{-1}(z)$ which is available for all 
standard distributions.

\end{document}